\newtheorem{theorem}{Theorem}
\newtheorem{proposition}[theorem]{Proposition}
\newtheorem{lemma}[theorem]{Lemma}
\begin{document}
%
\title{A Unified Form of EVENODD and RDP  Codes and Their Efficient Decoding}
\author{Hanxu~Hou,~\IEEEmembership{Member,~IEEE,}
        Yunghsiang S. Han,~\IEEEmembership{Fellow,~IEEE,}
        ~Kenneth W. Shum,~\IEEEmembership{Senior~Member,~IEEE} and~Hui Li~\IEEEmembership{Member,~IEEE,}
\thanks{Hanxu Hou is with the School of Electrical Engineering \& Intelligentization, Dongguan University of Technology and  with the Shenzhen Key Lab of Information Theory \& Future Internet Architecture, Peking University Shenzhen Graduate School~(E-mail: houhanxu@163.com). Yunghsiang S. Han is with the School of Electrical Engineering \& Intelligentization, Dongguan University of Technology~(E-mail: yunghsiangh@gmail.com). Kenneth W. Shum is with the Institute of Network Coding, The Chinese University of Hong Kong~(E-mail: wkshum@inc.cuhk.edu.hk). Hui Li is with the Shenzhen Key Lab of Information Theory \& Future Internet Architecture, Future Network PKU Lab of National Major Research Infrastructure, Peking University Shenzhen Graduate School(E-mail: lih64@pkusz.edu.cn).}
 }

\markboth{IEEE Transactions on Communications}%
{Submitted paper}

\maketitle
\vspace{-1.8cm}
\begin{abstract}
 Array codes have been widely employed in storage systems, such as Redundant Arrays of Inexpensive Disks (RAID). The row-diagonal parity (RDP) codes and EVENODD codes are two popular double-parity array codes. As the capacity of hard disks increases,  better fault tolerance by using array codes with three or more parity disks is needed. Although many extensions of RDP codes and EVENODD codes have been proposed, the high decoding complexity is  the main drawback of them. In this paper, we present a new construction for all families of EVENODD  codes and RDP codes, and propose a unified form of  them. Under this unified form,  RDP codes can be treated as shortened codes of EVENODD codes. Moreover,  an efficient decoding algorithm based on an LU factorization of Vandermonde matrix is proposed when the number of continuous surviving parity columns is no less than the number of erased information columns. The new decoding algorithm is faster than the existing algorithms when more than three information columns fail. The proposed efficient decoding algorithm is also applicable to other Vandermonde array codes.
 Thus the proposed MDS array code is practically very meaningful for storage systems that need higher reliability.
\end{abstract}

\begin{IEEEkeywords}
RAID, array codes, EVENODD, RDP, efficient decoding, LU factorization.
\end{IEEEkeywords}

\IEEEpeerreviewmaketitle

\section{Introduction}
\IEEEPARstart{A}rray codes have been widely employed in storage systems, such as Redundant Arrays of Inexpensive Disks (RAID)~\cite{RAID89, RAID93}, for the purpose of enhancing data reliability. In the current RAID-6 system, two disks are dedicated to the storage of parity-check bits, so that any two disk failures can be tolerated. There are a lot of existing works on the design of array codes which can recover any two disks failures, such as the EVENODD codes~\cite{blaum1995evenodd} and the row-diagonal parity (RDP) codes~\cite{corbett2004row}.

As the capacities of hard disks are increasing in a much faster pace than the decreasing of bit error rates, the protection offered by double parities will soon be inadequate \cite{BeyondRAID}. The issue of reliability is more pronounced in solid-state drives, which have significant wear-out rates when the frequencies of disk writes are high. In order to tolerate three or more disk failures, the EVENODD codes were extended in \cite{blaum1996mds}, and the RDP codes were extended in \cite{RTP12,blaum2006family}. All of the above coding methods are binary array codes, whose codewords are $m\times n$ arrays with each entry belonging to the binary field $\mathbb{F}_2$, for some positive integers $m$ and $n$. Binary array codes enjoy the advantage that encoding and decoding can be done by Exclusive OR (XOR) operations. 
The $n$ disks are identified as $n$ columns, and the $m$ bits in each column are stored in the corresponding disk. A binary array code is said to be \emph{systematic} if, for some positive integer $r$ less than $n$, the right-most $r$ columns store the parity bits, while the left-most $k=n-r$ columns store the uncoded data bits. 
If the array code can tolerate arbitrary $r$ erasures, then it is called a maximum-distance separable (MDS) array code. In other words, in an MDS array code, the information bits can be recovered from any $k$ columns.

\subsection{Related Works}
There are many follow-up studies on EVENODD codes~\cite{blaum1995evenodd} and RDP codes~\cite{corbett2004row} along different directions, such as the extensions of fault tolerance \cite{blaum1996mds,blaum2002evenodd,RTP12}, the improvement of repair problem \cite{wang2010rebuilding,xiang2010optimal,RecoverRaid,Zhu2014On} and efficient decoding methods \cite{huang2008star,Jiang2013Improved,wang2012triple,Huang2016An} of their extensions. 

Huang and Xu~\cite{huang2008star} extended the EVENODD codes to be STAR codes with three parity columns. The EVENODD codes were extended by Blaum, Bruck and Vardy~\cite{blaum1996mds,blaum2002evenodd} for three or more parity columns, with the additional assumption that the multiplicative order of 2 mod $p$ is equal to $p-1$. A sufficient condition for the extended EVENODD codes to be MDS with more than eight parity columns is given in \cite{Hou2016On}. Goel and Corbett~\cite{RTP12} proposed the RTP codes that extend the RDP codes to tolerate three disk failures. Blaum \cite{blaum2006family} generalized the RDP codes that can correct more than three column erasures and showed that the extended EVENODD codes and generalized RDP codes share the same MDS property condition. Blaum and Roth \cite{BlaumRoth93} proposed Blaum-Roth codes, which are non-systematic MDS array codes  constructed over a Vandermonde matrix. Some efficient systematic encoding methods for Blaum-Roth codes are given in \cite{BlaumRoth93,Qian2011On,Guo2013An}. We call the existing MDS array codes in \cite{blaum1995evenodd,corbett2004row,blaum1996mds,blaum2002evenodd,RTP12,blaum2006family,BlaumRoth93,huang2008star,Jiang2013Improved,wang2012triple,Huang2016An} as Vandermonde MDS array codes, as their constructions are based on Vandermonde matrices.


 \emph{Decoding complexity} in this work is defined as the number of XORs required to recover the erased no more than $r$ columns (including information erasure and parity erasure) from surviving $k$ columns. There are many decoding methods for extended EVENODD codes \cite{Jiang2013Improved} and generalized RDP codes; however, most of them focus on $r=3$. Jiang \emph{et al.} \cite{Jiang2013Improved} proposed a decoding algorithm for extended EVENODD codes with $r=3$. To further reduce decoding complexity of the extended EVENODD codes with $r=3$, Huang and Xu~\cite{huang2008star} invented STAR codes. 
 One extension of RDP codes with three parity columns is RTP codes, whose decoding has been improved by Huang \emph{et al.}\cite{Huang2016An}. Two efficient interpolation-based encoding algorithms for Blaum-Roth codes were proposed in \cite{Qian2011On,Guo2013An}. However, the efficient algorithms in \cite{Qian2011On,Guo2013An} are not applicable to the decoding of the extended EVENODD codes and generalized RDP codes. An efficient erasure decoding method that solves Vandermonde linear system over a polynomial ring was given in \cite{BlaumRoth93} for Blaum-Roth codes, and the decoding method is also applicable to the erasure decoding of extended EVENODD codes if the number of information erasures is no larger than the number of continuous surviving parity columns.  There is no efficient decoding method for arbitrary erasures and one needs to employ the traditional decoding method such as Cramer's rule to recover the erased columns. 

\subsection{Contributions}
In this paper, we present a unified form  of EVENODD  codes and RDP  codes  that include the existing RDP codes and their extensions in \cite{corbett2004row,blaum2006family}, along with the existing EVENODD codes and their extensions in \cite{blaum1995evenodd,blaum1996mds,blaum2002evenodd}. Under this unified form,  these two families of codes are shown having a close relationship between each other.  Based on this unified form, we also propose a fast method for the recovery of failed columns. This method is based on a factorization of Vandermonde matrix into very sparse lower and upper triangular matrices. Similar to the decoding method in \cite{BlaumRoth93}, the proposed fast decoding method can recover up to $r$ erasures such that the number of information erasure is no larger than the number of continuous surviving parity columns. We then illustrate the methodology by applying it to EVENODD codes and RDP  codes. We compare the  decoding complexity of the proposed method  with those presented in \cite{BlaumRoth93} for the extended EVENODD codes and generalized RDP codes. The proposed method has lower decoding complexity than that of the decoding algorithm given in~\cite{BlaumRoth93}, and is also applicable to other Vandermonde MDS array codes.

\section{Unified Form of EVENODD Codes and RDP Codes}
\label{sec:array_code}
In this section, we first present  EVENODD codes and RDP  codes. Then, we give a unified form of them and illustrate that RDP codes are   shortened  EVENODD  codes under this form. 

The array codes considered in this paper contain $p-1$ rows and $k+r$ columns, where $p$ is an odd number.
In the following, we let $k$ and $r$ be positive integers which are both no larger than $p$. Let $\textbf{g}(\ell) =(g(0),g(1),\ldots,g(\ell-1))$ be an $\ell$-tuple consisting of $\ell$ distinct integers that range from 0 to $p-1$, where $\ell\leq p$. The $i$-th entry of column $j$ are denoted as $a_{i,j}$ and $b_{i,j}$ for EVENODD  codes and RDP codes respectively. The subscripts are taken modulo $p$ throughout the paper, if it is not specified.

\subsection{EVENODD  Codes}
For an odd $p\geq\{k,r\}$, we define the EVENODD code as follows. It is a $(p-1)\times (k+r)$ array code, with the first $k$ columns storing the information bits, and the last $r$ columns storing the parity bits.
For $j=0,1,\ldots,k-1$, column $j$ is called \emph{information column} that stores the information bits $a_{0,j},a_{1,j},\ldots,a_{p-2,j}$, and for $j=k,k+1,\ldots,k+r-1$, column $j$ is called \emph{parity column} that stores the parity bits $a_{0,j},a_{1,j},\ldots,a_{p-2,j}$. 

Given the $(p-1)\times k$ information array $[a_{i,j}]$ for $i=0,1,\ldots,p-2$ and $j=0,1,\ldots,k-1$, we add an extra \emph{imaginary row} $a_{p-1,j}=0$, for $j=0,1,\ldots, k-1$, to this information array. The parity bits in column $k$ are computed by
\begin{equation}
a_{i,k}=\sum_{j=0}^{k-1}a_{i,j} \text{ for }0\leq i \leq p-2,
\label{EVENODD:row}
\end{equation}
and the parity bits stored in column $k+\ell$, $\ell=1,2,\ldots,r-1$,
 are computed by  
\begin{equation}
a_{i,k+\ell}=a_{p-1,k+\ell}+\sum_{j=0}^{k-1}a_{i-\ell g(j),j} \text{ for }0\leq i \leq p-2,
\label{EVENODD:diag}
\end{equation}
where 
\begin{equation}
a_{p-1,k+\ell}=\sum_{j=0}^{k-1}a_{p-1-\ell g(j),j}.
\label{eq:skl}
\end{equation}
We denote the EVENODD  codes defined in the above equations as $\textsf{EVENODD}(p,k,r;\textbf{g}(k))$. The default values in $\textbf{g}(k)$ are $(0,1,\ldots,k-1)$, and we simply write $\textsf{EVENODD}(p,k,r)$ if the values in $\textbf{g}(k)$ are default. An example of $\textsf{EVENODD}(5,3,3;(0,1,4))$ is given in Table \ref{table:evenodd}. Under the above definition, the EVENODD code in \cite{blaum1995evenodd} is $\textsf{EVENODD}(p,p,2)$ with $\textbf{g}(k)=(0,1,\ldots,k-1)$, and the extended EVENODD code in \cite{blaum1996mds} is $\textsf{EVENODD}(p,p,r)$ with $\textbf{g}(k)=(0,1,\ldots,k-1)$.
\begin{table*}
\caption{Encoding of $\textsf{EVENODD}(5,3,3;(0,1,4))$. Note that, by \eqref{eq:skl}, $a_{4,4}=a_{3,1}+a_{0,2}$ and  $a_{4,5}=a_{2,1}+a_{1,2}.$}
\label{table:evenodd}
\begin{center}
\begin{tabular}{|c|c|c|c|c|c|} \hline
$a_{0,0}$ & $a_{0,1}$ &  $a_{0,2}$& $a_{0,3}=a_{0,0}+a_{0,1}+a_{0,2}$& $a_{0,4}=a_{0,0}+a_{1,2}+a_{4.4}$ & $a_{0,5}=a_{0,0}+a_{3,1}+a_{2,2}+a_{4,5}$\\ \hline 
$a_{1,0}$ & $a_{1,1}$ &  $a_{1,2}$& $a_{1,3}=a_{1,0}+a_{1,1}+a_{1,2}$& $a_{1,4}=a_{1,0}+a_{0,1}+a_{2,2}+a_{4,4}$ & $a_{1,5}=a_{1,0}+a_{3,2}+a_{4,5}$\\ \hline 
$a_{2,0}$ & $a_{2,1}$ &  $a_{2,2}$& $a_{2,3}=a_{2,0}+a_{2,1}+a_{2,2}$& $a_{2,4}=a_{2,0}+a_{1,1}+a_{3,2}+a_{4.4}$ & $a_{2,5}=a_{2,0}+a_{0,1}+a_{4,5}$\\ \hline 
$a_{3,0}$ & $a_{3,1}$ &  $a_{3,2}$& $a_{3,3}=a_{3,0}+a_{3,1}+a_{3,2}$& $a_{3,4}=a_{3,0}+a_{2,1}+a_{4.4}$ & $a_{3,5}=a_{3,0}+a_{1,1}+a_{0,2}+a_{4,5}$\\ \hline  
\end{tabular}
\end{center}
\vspace{-0.8cm}
\end{table*}

\subsection{RDP  Codes} 
RDP code is an array code of size $(p-1) \times (k+r)$. Given the parameters $k,r,p$ that satisfy $p \geq \max(k+1,r)$, we add an extra \emph{imaginary row} $b_{p-1,0}=b_{p-1,1}=\cdots=b_{p-1,k-1}=0$ to the $(p-1)\times k$ information array $[b_{i,j}]$, for $i=0,1,\ldots,p-2$ and $j=0,1,\ldots,k-1$, as in $\textsf{EVENODD}(p,k,r)$. The parity bits of the $\textsf{RDP}(p,k,r; \textbf{g}(k+1)) $ are computed as follows:
\begin{equation}
b_{i,k}=\sum_{j=0}^{k-1}b_{i,j} \text{ for }0\leq i \leq p-2,
\label{RDP:row}
\end{equation}
\begin{equation}
b_{i,k+\ell}=\sum_{j=0}^{k}b_{i-\ell g(j),j} \text{ for }0\leq i \leq p-2, 1\leq \ell \leq r-1.
\label{RDP:diag}
\end{equation}
Like $\textsf{EVENODD}(p,k,r)$, the default value of $\textbf{g}(k+1)$ are $(0,1,\ldots,k)$. 
The first 4 rows in Table \ref{table:rdp1} are the array of $\textsf{RDP}(5,3,3;(0,1,4,3))$. The RDP code in~\cite{corbett2004row} is $\textsf{RDP}(p,p-1,2)$ with $\textbf{g}(p)=(0,1,\ldots,p-1)$ and $\textsf{RDP}(p,p-1,r)$ is the extended RDP in \cite{blaum2006family}.



\subsection{Unified Form}
There is a close relationship between $\textsf{RDP}(p,k,r;\textbf{g}(k+1))$ and $\textsf{EVENODD}(p,k,r;\textbf{g}(k))$ when both array codes have the same number of parity columns. The relationship can be seen by augmenting the arrays as follows. For RDP  codes, we define the corresponding augmented array as a $p\times (k+r)$ array with the top $p-1$ rows the same as in $\textsf{RDP}(p,k,r)$, and the last row defined by $b_{p-1,j}=0$ for $0\leq j\leq k$ and
\begin{equation}
b_{p-1,k+\ell}=\sum_{j=0}^{k}b_{p-1-\ell g(j),j} \text{ for } 1\leq \ell \leq r-1.
\label{RDP:piggyback}
\end{equation}
Note that \eqref{RDP:piggyback} is the extension of \eqref{RDP:diag} when $i=p-1$. The auxiliary row in the augmented array is defined such that the column sums of columns $k+1,k+2,\ldots,k+r-1$ are equal to zero. The above claim is proved as follows.

\begin{lemma}
For $1\leq \ell \leq r-1$, we have $\sum_{i=0}^{p-1}b_{i,k+\ell}=0$.
\end{lemma}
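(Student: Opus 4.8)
The plan is to compute the full column sum $\sum_{i=0}^{p-1} b_{i,k+\ell}$ directly from the defining equations, exploiting the fact that the augmented row \eqref{RDP:piggyback} is precisely what makes the diagonal parity formula \eqref{RDP:diag} valid across the entire index range $i = 0,1,\ldots,p-1$. First I would substitute the unified expression $b_{i,k+\ell} = \sum_{j=0}^{k} b_{i-\ell g(j),j}$ (now valid over all $p$ rows because the auxiliary row has been appended) and interchange the two summations, obtaining $\sum_{i=0}^{p-1} b_{i,k+\ell} = \sum_{j=0}^{k} \sum_{i=0}^{p-1} b_{i-\ell g(j),j}$.

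The key step is to observe that for each fixed $j$, the map $i \mapsto i - \ell g(j) \bmod p$ is a bijection of the complete residue system $\{0,1,\ldots,p-1\}$ onto itself, so the inner sum collapses to $\sum_{i=0}^{p-1} b_{i,j}$, independent of the shift $\ell g(j)$. This reduces the quantity to $\sum_{j=0}^{k} \sum_{i=0}^{p-1} b_{i,j}$, i.e.\ the sum over the first $k+1$ columns (the $k$ information columns together with the row-parity column $k$) of their full column sums, with the diagonal offsets entirely washed out.

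Next I would evaluate these column sums separately. For the information columns $0 \le j \le k-1$ the augmented entry is $b_{p-1,j}=0$, so each full column sum equals $\sum_{i=0}^{p-2} b_{i,j}$, and summing over $j$ gives the total $\sum_{j=0}^{k-1}\sum_{i=0}^{p-2} b_{i,j}$ of all information bits. For the row-parity column $j=k$ the augmented entry is again $b_{p-1,k}=0$, and \eqref{RDP:row} yields $\sum_{i=0}^{p-1} b_{i,k} = \sum_{i=0}^{p-2}\sum_{j=0}^{k-1} b_{i,j}$, which is exactly the same total once more. Adding the two contributions therefore produces twice the sum of all information bits.

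The final observation, which is the real crux even though it is elementary, is that all arithmetic takes place over $\mathbb{F}_2$, where any quantity added to itself vanishes; hence $2\sum_{j=0}^{k-1}\sum_{i=0}^{p-2} b_{i,j} = 0$, proving the claim. The only point demanding care is the bijection argument, which relies crucially on summing $i$ over a full residue system modulo $p$: this is precisely why the augmented row \eqref{RDP:piggyback} had to be introduced before the lemma could be stated, and it is the natural place to concentrate the verification.
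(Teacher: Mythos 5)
Your proposal is correct and follows essentially the same route as the paper's own (much terser) proof: reduce the column sum of column $k+\ell$ to the sum of all bits in columns $0$ through $k$ via the shift-invariance of summing over a complete residue system mod $p$, then note that column $k$ contributes the same total as columns $0$ through $k-1$, so everything cancels over $\mathbb{F}_2$. You have simply made explicit the interchange of summations and the bijection argument that the paper leaves implicit.
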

\begin{proof}
The summation of all bits in column $k+\ell$ of the augmented array is the summation of all bits in columns 0 to $k$. Since the summation of all bits in column $k$ is the summation of all bits in columns 0 to $k-1$, we have that the summation of all bits in column $k+\ell$ is equal to 0.
\end{proof}

By the above lemma, we can compute $b_{p-1,k+\ell}$ for $\ell=1,2, \ldots, r-1$ as
\[
b_{p-1,k+\ell}=b_{0,k+\ell}+b_{1,k+\ell}+\cdots+b_{p-2,k+\ell}.
\]
An example of the augmented array code of $\textsf{RDP}(5,3,3;(0,1,4,3))$ is given in Table \ref{table:rdp1}.

\begin{table*}
\caption{The augmented array of $\textsf{RDP}(5,3,3;(0,1,4,3))$.}
\label{table:rdp1}
\begin{center}
\begin{tabular}{|c|c|c|c|c|c|} \hline
$b_{0,0}$ & $b_{0,1}$ &  $b_{0,2}$& $b_{0,3}=b_{0,0}+b_{0,1}+b_{0,2}$& $b_{0,4}=b_{0,0}+b_{1,2}+b_{2,3}$ & $b_{0,5}=b_{0,0}+b_{3,1}+b_{2,2}$\\ \hline 
$b_{1,0}$ & $b_{1,1}$ &  $b_{1,2}$& $b_{1,3}=b_{1,0}+b_{1,1}+b_{1,2}$& $b_{1,4}=b_{1,0}+b_{0,1}+b_{2,2}+b_{3,3}$ & $b_{1,5}=b_{1,0}+b_{3,2}+b_{0,3}$\\ \hline 
$b_{2,0}$ & $b_{2,1}$ &  $b_{2,2}$& $b_{2,3}=b_{2,0}+b_{2,1}+b_{2,2}$& $b_{2,4}=b_{2,0}+b_{1,1}+b_{3,2}$ & $b_{2,5}=b_{2,0}+b_{0,1}+b_{1,3}$\\ \hline 
$b_{3,0}$ & $b_{3,1}$ &  $b_{3,2}$& $b_{3,3}=b_{3,0}+b_{3,1}+b_{3,2}$& $b_{3,4}=b_{3,0}+b_{2,1}+b_{0,3}$ & $b_{3,5}=b_{3,0}+b_{1,1}+b_{0,2}+b_{2,3}$\\ \hline \hline
0 & 0 & 0 & 0 & $b_{4,4}=b_{3,1}+b_{0,2}+b_{1,3}$ & $b_{4,5}=b_{2,1}+b_{1,2}+b_{3,3}$\\ \hline
\end{tabular}
\end{center}
\vspace{-0.8cm}
\end{table*}

Similarly, for an $\textsf{EVENODD}(p,k,r;\textbf{g}(k))$, the augmented array is a $p\times (k+r)$ array $[a'_{i,j}]$ defined as follows. The first $k+1$ columns are the same as those of $\textsf{EVENODD}(p,k,r;\textbf{g}(k))$, i.e., for $j=0,1,\ldots,k$ and $i=0,1,\ldots,p-1$, $a'_{i,j}=a_{i,j}$. 
For $\ell=1,2,\ldots,r-1$, we define the parity bits in column $k+\ell$ as
\begin{equation}
\label{eq:evenodd1}
a'_{i,k+\ell}:=\sum_{j=0}^{k-1}a_{i-\ell g(j),j} \text{ for } 0\leq i\leq p-1.
\end{equation}
We note that $a'_{p-1,k+\ell}$ is the same as $a_{p-1,k+\ell}$ defined in \eqref{eq:skl}. According to \eqref{EVENODD:diag}, the parity bits in column $k+\ell$ of $\textsf{EVENODD}(p,k,r;\textbf{g}(k))$ can be obtained from the augmented array by 
\[
a_{i,k+\ell}=a'_{i,k+\ell}+a'_{p-1,k+\ell}.
\]
\begin{lemma}
The bits in column $k+\ell$ for $\ell=1,2,\ldots,r-1$ of the augmented array can be obtained from $\textsf{EVENODD}(p,k,r, g(k))$ by 
\begin{equation}
a'_{p-1,k+\ell}=\sum_{i=0}^{p-2}(a_{i,k}+a_{i,k+\ell}), \text{ and }
\label{eq:augmented1}
\end{equation}
\begin{equation}
a'_{i,k+\ell}=a_{i,k+\ell}+a'_{p-1,k+\ell} \text{ for } i=0,1,\ldots,p-2.
\label{eq:augmented2}
\end{equation}
\label{lm:evenodd}
\end{lemma}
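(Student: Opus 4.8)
The plan is to dispatch \eqref{eq:augmented2} immediately and then reduce \eqref{eq:augmented1} to a cancellation identity over $\mathbb{F}_2$. Equation \eqref{eq:augmented2} is merely a rearrangement of the relation $a_{i,k+\ell}=a'_{i,k+\ell}+a'_{p-1,k+\ell}$ recorded just before the lemma: adding $a'_{p-1,k+\ell}$ to both sides over $\mathbb{F}_2$ gives exactly \eqref{eq:augmented2} for $0\le i\le p-2$. Hence the real content lies in \eqref{eq:augmented1}.

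For \eqref{eq:augmented1} I would split the sum as $\sum_{i=0}^{p-2}a_{i,k}+\sum_{i=0}^{p-2}a_{i,k+\ell}$ and evaluate each piece. For the first piece, substituting the row-parity definition \eqref{EVENODD:row} and interchanging the order of summation gives $\sum_{j=0}^{k-1}\sum_{i=0}^{p-2}a_{i,j}$; since the imaginary row forces $a_{p-1,j}=0$, the inner partial sum extends to a full one, so this piece equals the grand total $\sum_{j=0}^{k-1}\sum_{i=0}^{p-1}a_{i,j}$. For the second piece, I first apply \eqref{eq:augmented2} to write $a_{i,k+\ell}=a'_{i,k+\ell}+a'_{p-1,k+\ell}$; because $p$ is odd the count $p-1$ is even, so the $p-1$ copies of the boundary term vanish over $\mathbb{F}_2$ and the piece reduces to $\sum_{i=0}^{p-2}a'_{i,k+\ell}$.

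The crux is to evaluate this last sum. Writing it as $\sum_{i=0}^{p-1}a'_{i,k+\ell}+a'_{p-1,k+\ell}$ over $\mathbb{F}_2$ isolates a single boundary bit, and the full sum $\sum_{i=0}^{p-1}a'_{i,k+\ell}$ turns out to be the grand total of the information bits as well: by the definition \eqref{eq:evenodd1}, each inner sum is $\sum_{i=0}^{p-1}a_{i-\ell g(j),j}$, and since the subscripts are taken modulo $p$ the map $i\mapsto i-\ell g(j)$ merely permutes $\{0,1,\dots,p-1\}$, whence $\sum_{i=0}^{p-1}a'_{i,k+\ell}=\sum_{j=0}^{k-1}\sum_{i=0}^{p-1}a_{i,j}$. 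This coincides with the first piece. Combining everything over $\mathbb{F}_2$, the two identical grand totals cancel and only the boundary bit remains, yielding $\sum_{i=0}^{p-2}(a_{i,k}+a_{i,k+\ell})=a'_{p-1,k+\ell}$, which is \eqref{eq:augmented1}.

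The step I expect to require the most care is the bookkeeping of the boundary index $i=p-1$, since three separate facts about it must be kept straight: the zero imaginary row is what upgrades the row-parity total to a full column sum, the evenness of $p-1$ (for odd $p$) is what annihilates the $p-1$ copies of $a'_{p-1,k+\ell}$ produced by \eqref{eq:augmented2}, and the mismatch between $\sum_{i=0}^{p-2}$ and $\sum_{i=0}^{p-1}$ is what leaves exactly one surviving copy of the boundary bit. The shift-invariance of the column sum is the conceptual heart of the argument, but it is routine once one notes that $i\mapsto i-\ell g(j)$ is a bijection modulo $p$.
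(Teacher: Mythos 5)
Your proof is correct and follows essentially the same route as the paper's: both arguments rest on the same three facts (the zero imaginary row upgrades $\sum_{i=0}^{p-2}a_{i,j}$ to a full column sum, the even count $p-1$ kills the repeated copies of $a'_{p-1,k+\ell}$, and the bijection $i\mapsto i-\ell g(j)$ modulo $p$ makes the shifted column sums equal the grand total), after which two copies of the grand total cancel over $\mathbb{F}_2$. The only cosmetic difference is that the paper identifies the surviving term as $\sum_{j=0}^{k-1}a_{p-1-\ell g(j),j}$ and invokes \eqref{eq:skl}, whereas you isolate it directly as the single boundary term $a'_{p-1,k+\ell}$ of $\sum_{i=0}^{p-1}a'_{i,k+\ell}$; both treat \eqref{eq:augmented2} as an immediate rearrangement.
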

\begin{proof}
Note  that 
\begin{eqnarray}
&&\sum_{i=0}^{p-2}(a_{i,k}+a_{i,k+\ell})
=\sum_{i=0}^{p-2}(\sum_{j=0}^{k-1}a_{i,j})+
\sum_{i=0}^{p-2}(a_{p-1,k+\ell}+\sum_{j=0}^{k-1}a_{i-\ell g(j),j})\label{eq:a-1}\\
&=&\sum_{i=0}^{p-2}a_{i,0}+\cdots+\sum_{i=0}^{p-2}a_{i,k-1}+\sum_{i=0}^{p-2}a_{i-\ell g(0),0}+\cdots+\sum_{i=0}^{p-2}a_{i-\ell g(k-1),k-1}+\underbrace{(a_{p-1,k+\ell}+\cdots+a_{p-1,k+\ell})}_{p-1}\nonumber\\
&=&\sum_{i=0}^{p-1}a_{i,0}+\cdots+\sum_{i=0}^{p-1}a_{i,k-1}+\sum_{i=0}^{p-2}a_{i-\ell g(0),0}+\cdots+\sum_{i=0}^{p-2}a_{i-\ell g(k-1),k-1}\label{eq:a-2}\\
&=&\sum_{i=0}^{p-1}a_{i-\ell g(0),0}+\cdots+\sum_{i=0}^{p-1}a_{i-\ell g(k-1),k-1}+\sum_{i=0}^{p-2}a_{i-\ell g(0),0}+\cdots+\sum_{i=0}^{p-2}a_{i-\ell g(k-1),k-1}\label{eq:a-3}\\
&=&(a_{p-1-\ell g(0),0}+a_{p-1-\ell g(1),1}+\cdots+a_{p-1-\ell g(k-1),k-1})\nonumber\\
&=&a'_{p-1,k+\ell},\nonumber
\end{eqnarray}
where \eqref{eq:a-1} comes from \eqref{EVENODD:row} and \eqref{EVENODD:diag}, \eqref{eq:a-2} comes from that $a_{p-1,j}=0$ for $j=0,1,\ldots,k-1$, and \eqref{eq:a-3} comes from the fact that 
\[
\{-\ell g(j), 1-\ell g(j),\ldots, p-1-\ell g(j) \} = \{0,1,\ldots,p-1 \} \bmod p
\]
for $1\leq \ell \leq r-1$, $0\leq g(j)\leq p-1$.
Therefore, we can obtain the bit $a'_{p-1,k+\ell}$ by \eqref{eq:augmented1} and the other bits in parity column $k+\ell$ by \eqref{eq:augmented2}.
\end{proof}

The augmented array of $\textsf{EVENODD}(5,3,3;(0,1,4))$ is given in Table \ref{table:evenodd1}.

\begin{table*}
\caption{The augmented array of $\textsf{EVENODD}(5,3,3;(0,1,4))$.}
\label{table:evenodd1}
\begin{center}
\begin{tabular}{|c|c|c|c|c|c|} \hline
$a_{0,0}$ & $a_{0,1}$ &  $a_{0,2}$& $a_{0,3}=a_{0,0}+a_{0,1}+a_{0,2}$& $a_{0,4}=a_{0,0}+a_{1,2}$ & $a_{0,5}=a_{0,0}+a_{3,1}+a_{2,2}$\\ \hline 
$a_{1,0}$ & $a_{1,1}$ &  $a_{1,2}$& $a_{1,3}=a_{1,0}+a_{1,1}+a_{1,2}$& $a_{1,4}=a_{1,0}+a_{0,1}+a_{2,2}$ & $a_{1,5}=a_{1,0}+a_{3,2}$\\ \hline 
$a_{2,0}$ & $a_{2,1}$ &  $a_{2,2}$& $a_{2,3}=a_{2,0}+a_{2,1}+a_{2,2}$& $a_{2,4}=a_{2,0}+a_{1,1}+a_{3,2}$ & $a_{2,5}=a_{2,0}+a_{0,1}$\\ \hline 
$a_{3,0}$ & $a_{3,1}$ &  $a_{3,2}$& $a_{3,3}=a_{3,0}+a_{3,1}+a_{3,2}$& $a_{3,4}=a_{3,0}+a_{2,1}$ & $a_{3,5}=a_{3,0}+a_{1,1}+a_{0,2}$\\ \hline  \hline
0 & 0 & 0 & 0 & $a_{4.4}=a_{3,1}+a_{0,2}$ & $a_{4,5}=a_{2,1}+a_{1,2}$ \\ \hline
\end{tabular}
\end{center}
\vspace{-0.8cm}
\end{table*}

The augmented array of $\textsf{RDP}(p,k,r;\textbf{g}(k+1))$ can be obtained from shortening the augmented array of $\textsf{EVENODD}(p,k+1,r;\textbf{g}(k+1))$ and we summarize this fact in the following.
\begin{proposition}
\label{prop:shorten}
Let $\textbf{g}(k+1)$ of $\textsf{RDP}(p,k,r;\textbf{g}(k+1))$ be the same as $\textbf{g}(k+1)$ of $\textsf{EVENODD}(p,k+1,r;\textbf{g}(k+1))$. The augmented array of $\textsf{RDP}(p,k,r;\textbf{g}(k+1))$  can be obtained from shortening the augmented array of $\textsf{EVENODD}(p,k+1,r;\textbf{g}(k+1))$ as follows: (i) imposing the following additional constraint on the information bits 
\begin{equation}
a'_{i,k}=a'_{i,0}+a'_{i,1}+\cdots+a'_{i,k-1}
\label{eq:add}
\end{equation}
for $i=0,1,\ldots,p-1$;
(ii)  removing column $k+1$ of the augmented array of $\textsf{EVENODD}(p,k+1,r;\textbf{g}(k+1))$.
\end{proposition}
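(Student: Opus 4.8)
The plan is to track, column by column, how the augmented array of $\textsf{EVENODD}(p,k+1,r;\textbf{g}(k+1))$ is transformed by operations (i) and (ii), and then to match the outcome against the defining relations \eqref{RDP:row}, \eqref{RDP:diag}, and \eqref{RDP:piggyback} of the augmented $\textsf{RDP}(p,k,r;\textbf{g}(k+1))$ array. Recall that in $\textsf{EVENODD}(p,k+1,r;\textbf{g}(k+1))$ the information columns are $0,1,\ldots,k$, the row-parity column is column $k+1$, and the diagonal-parity columns are $(k+1)+\ell$ for $\ell=1,\ldots,r-1$. By the construction of the augmented array and \eqref{eq:evenodd1} (with $k$ replaced by $k+1$) we have $a'_{i,k+1}=\sum_{j=0}^{k}a'_{i,j}$ and $a'_{i,(k+1)+\ell}=\sum_{j=0}^{k}a'_{i-\ell g(j),j}$ for all $0\le i\le p-1$, where $a'_{i,j}=a_{i,j}$ on columns $0\le j\le k+1$.

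First I would show that operation (i) annihilates the row-parity column $k+1$, which justifies its deletion in operation (ii). Substituting \eqref{eq:add} into $a'_{i,k+1}=\sum_{j=0}^{k}a'_{i,j}$ gives
\begin{equation}
a'_{i,k+1}=\sum_{j=0}^{k-1}a'_{i,j}+a'_{i,k}=a'_{i,k}+a'_{i,k}=0
\label{eq:rowparvanish}
\end{equation}
over $\mathbb{F}_2$ for every $i$, so column $k+1$ is identically zero once the constraint is imposed, and its removal is lossless.

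Next I would set up the entry-by-entry identification with the augmented RDP array. For the information columns $0\le j\le k-1$ I put $b_{i,j}=a'_{i,j}$, so that the imaginary rows $a'_{p-1,j}=0$ match $b_{p-1,j}=0$. For column $k$, constraint \eqref{eq:add} is precisely the RDP row-parity rule \eqref{RDP:row}, so $a'_{i,k}$ becomes $b_{i,k}$; in particular $a'_{p-1,k}=\sum_{j=0}^{k-1}a'_{p-1,j}=0=b_{p-1,k}$. Finally, deleting column $k+1$ slides each EVENODD diagonal column $(k+1)+\ell$ into position $k+\ell$, and its defining sum $\sum_{j=0}^{k}a'_{i-\ell g(j),j}$ then coincides with the RDP diagonal rule \eqref{RDP:diag}--\eqref{RDP:piggyback}, because the summands at $j<k$ are the identified information bits and the summand at $j=k$ equals $a'_{i-\ell g(k),k}=b_{i-\ell g(k),k}$ by the previous step. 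Since this holds for all $0\le i\le p-1$, it covers both the top $p-1$ rows and the augmented last row at once.

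The only delicate point is the role of column $k$ inside the diagonal-parity sums. In $\textsf{EVENODD}(p,k+1,r)$ this column is a free information column, whereas the analogous column of $\textsf{RDP}(p,k,r)$ is the row parity; the two diagonal formulas $\sum_{j=0}^{k}a'_{i-\ell g(j),j}$ and $\sum_{j=0}^{k}b_{i-\ell g(j),j}$ are literally the same expression only because operation (i) converts the former column into the latter. Hence the crux is to confirm that imposing \eqref{eq:add} makes the $j=k$ term of the EVENODD diagonal sum agree with the $j=k$ term of the RDP diagonal sum; once that is established, the remaining steps---the index shift $(k+1)+\ell\mapsto k+\ell$ and the vanishing \eqref{eq:rowparvanish} of column $k+1$---are routine bookkeeping.
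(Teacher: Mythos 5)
Your proposal is correct and follows essentially the same route as the paper's proof: both observe that constraint \eqref{eq:add} forces the row-parity column $k+1$ of the augmented $\textsf{EVENODD}(p,k+1,r;\textbf{g}(k+1))$ array to vanish (justifying its removal), identify column $k$ with the RDP row parity via \eqref{RDP:row}, and then match the diagonal columns term by term since $\sum_{j=0}^{k}a'_{i-\ell g(j),j}$ coincides with \eqref{RDP:diag}--\eqref{RDP:piggyback} once $b_{i,j}=a'_{i,j}$ for $0\le j\le k$. Your explicit computation $a'_{i,k+1}=a'_{i,k}+a'_{i,k}=0$ just makes concrete what the paper states in one line.
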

\begin{proof}
Consider the augmented array of $\textsf{EVENODD}(p,k+1,r;\textbf{g}(k+1))$ and assume that the information bits of column $k$ satisfy~\eqref{eq:add}. By~\eqref{EVENODD:row}, the parity bits in column $k+1$ are all zeros. After deleting column $k+1$ from the augmented array of $\textsf{EVENODD}(p,k+1,r;\textbf{g}(k+1))$ and reindexing the columns after this deleted column by reducing all indices by one, we have a new array with $k+r$ columns of a shortened $\textsf{EVENODD}(p,k,r;\textbf{g}(k))$.
Let the augmented array of $\textsf{RDP}(p,k,r;\textbf{g}(k+1))$ with the $k$ information columns being the same as the first $k$ information columns of the augmented array of $\textsf{EVENODD}(p,k+1,r;\textbf{g}(k+1))$ such that these columns are the same as those of the array of the shortened $\textsf{EVENODD}(p,k,r;\textbf{g}(k))$.  Then column $k$ of the augmented array of $\textsf{RDP}(p,k,r;\textbf{g}(k+1))$ is the same as column $k$ of the array of the shortened $\textsf{EVENODD}(p,k,r;\textbf{g}(k))$ according to \eqref{eq:add} and \eqref{RDP:row}. Recall that the bit $b_{i,k+\ell}$ in column $k+\ell$,  $i=0,1,\ldots,p-1$ and $\ell=2,3,\ldots,r-1$, of the augmented array of $\textsf{RDP}(p,k,r;\textbf{g}(k+1))$ is computed by \eqref{RDP:diag} (or \eqref{RDP:piggyback}). Since $a'_{i,j}=a_{i,j}=b_{i,j}$ for $i=0,1,\ldots,p-1$ and $j=0,1,\ldots,k$, $b_{i,k+\ell}$ is the same as $a'_{i,k+\ell}$ in the array of the shortened $\textsf{EVENODD}(p,k,r;\textbf{g}(k))$ that is defined by \eqref{eq:evenodd1}. Therefore, we can obtain the augmented $\textsf{RDP}(p,k,r;\textbf{g}(k+1))$ by shortening the augmented $\textsf{EVENODD}(p,k+1,r;\textbf{g}(k+1))$ by imposing the condition \eqref{eq:add} and removing column $k+1$, and this completes the proof.
\end{proof}
By Proposition \ref{prop:shorten}, the unified form of $\textsf{RDP}(p,k,r;\textbf{g}(k+1))$  and $\textsf{EVENODD}(p,k,r;\textbf{g}(k))$ is the augmented array of  $\textsf{EVENODD}(p,k+1,r;\textbf{g}(k+1))$.
In the following, we focus on $\textsf{EVENODD}(p,k,r;\textbf{g}(k))$, as the augmented array of $\textsf{RDP}(p,k,r;\textbf{g}(k+1))$ can be viewed as the shorten augmented array of $\textsf{EVENODD}(p,k+1,r;\textbf{g}(k+1))$.

\section{Algebraic Representation} 
Let $\mathbb{F}_2[x]$ be the ring of polynomials over binary field $\mathbb{F}_2$, and $R_p$ be the quotient ring $\mathbb{F}_2[x]/(1+x^p)$. An element in $R_p$ can be represented by a polynomial of degree strictly less than $p$ with coefficients in $\mathbb{F}_2$, we will refer to an element of $R_p$ as a polynomial in the sequel. Note that the multiplication of two polynomials in $R_p$ is performed under modulo $1+x^p$.

The ring $R_p$ has been discussed in  \cite{shumregenerating,Hou2016BASIC} and has been used in designing regenerating codes with low computational complexity. Let
\[
M_p(x):=1+x+\cdots+x^{p-1}.
\]
$R_p$ is isomorphic to a direct sum of two finite fields $\mathbb{F}_2[x]/(1+x)$ and 
$\mathbb{F}_{2}[x]/M_p(x)$\footnote{When 2 is a primitive element in $\mathbb{F}_p$, $\mathbb{F}_{2}[x]/M_p(x)$ is a finite field.} if and only if 2 is a primitive element in $\mathbb{F}_p$~\cite{Fenn1997Bit}.   In~\cite{Silverman1999Fast}, $\mathbb{F}_{2}[x]/M_p(x)$ was used for performing computations in~$\mathbb{F}_{2^{p-1}}$, when $p$ is a prime such that 2 is a primitive element in $\mathbb{F}_p$. In addition, Blaum {\em et al.}~\cite{blaum1996mds,blaum2002evenodd} discussed the rings $\mathbb{F}_{2}[x]/M_p(x)$ in detail.

We will represent each column in a augmented array of $\textsf{EVENODD}(p,k,r;\textbf{g}(k))$ by a polynomial in $R_p$, so that a $p\times n$ array is identified with an $n$-tuple
\begin{equation}
(a'_0(x),a'_1(x),\cdots,a'_{k+r-1}(x))
\label{eq:tuple}
\end{equation}
in $R_{p}^{k+r}$, where $n=k+r$. 
Under this representation, the augmented array of $\textsf{EVENODD}(p,k,r;\textbf{g}(k))$ can be defined in terms of a Vandermonde matrix.

In the $p\times (k+r)$ array, the $p$ bits $a'_{0,j},a'_{1,j},\ldots,a'_{p-1,j}$ in column $j$ can be represented as a polynomial
$$a'_j(x)=a'_{0,j}+a'_{1,j}x+\cdots+a'_{p-1,j}x^{p-1}$$
for $j=0,1,\ldots,k+r-1$. The first $k$ polynomials $a'_0(x),a'_1(x),\ldots,a'_{k-1}(x)$ are called \emph{information polynomials}, and the last $r$ polynomials $a'_{k}(x),a'_{k+1}(x),\ldots,a'_{k+r-1}(x)$ are the \emph{parity polynomials}. The parity bit of augmented array of $\textsf{EVENODD}(p,k,r;\textbf{g}(k))$ defined in \eqref{eq:evenodd1} is equivalent to the following equation over the ring $R_p$
\begin{align}
\begin{bmatrix}
a'_k(x) & \cdots & a'_{k+r-1}(x) 
\end{bmatrix} 
= \begin{bmatrix}
a'_0(x)  & \cdots & a'_{k-1}(x) \\
\end{bmatrix}\cdot
\mathbf{V}_{k\times r}(\textbf{g}(k)),
\label{EVENODD:algebraic}
\end{align}
where $\mathbf{V}_{k\times r}(\textbf{g}(k))$ is the $k\times r$ Vandermonde matrix
\begin{equation}
\mathbf{V}_{k\times r}(\textbf{g}(k)):=
\begin{bmatrix}
 1&x^{g(0)}&\cdots & x^{(r-1)g(0)} \\
 1&x^{g(1)}&\cdots & x^{(r-1)g(1)} \\
  \vdots &\vdots&\ddots & \vdots\\
 1&x^{g(k-1)}& \cdots & x^{(r-1)g(k-1)} \\
 \end{bmatrix}
 \label{eq:vand}
 \end{equation}
and additions and multiplications in the above calculations are performed in $R_p$. \eqref{EVENODD:algebraic} can be verified as follows:
\begin{eqnarray}
a_{k+\ell}'(x)=\sum_{i=0}^{p-1}a_{i,k+\ell}'x^i=\sum_{j=0}^{k-1}a_j(x)x^{\ell  g(j)}=	\sum_{j=0}^{k-1}\sum_{i'=0}^{p-1}a_{i',j}x^{i'+\ell g(j)}=	\sum_{i'=0}^{p-1}\sum_{j=0}^{k-1}a_{i',j}x^{i'+\ell  g(j)}.
\label{eq:sum} 
\end{eqnarray}
Let $i'=i-\ell g(j)$, we have 
$$a_{i,k+\ell}'=\sum_{j=0}^{k-1}a_{i-\ell g(j),j},$$
which is the same as \eqref{eq:evenodd1}.
In other words, each parity column in the augmented array of EVENODD  codes is obtained by adding some cyclically shifted version of the information columns. 

Recall that $a'_{j}(x)$ is a polynomial over $R_p$ for $j=0,1,\ldots,k+r-1$. When we reduce a polynomial $a'_{j}(x)$ modulo $M_p(x)$, it means that we replace the coefficient $a'_{i,j}$ with $a'_{i,j}+a'_{p-1,j}$ for $i=0,1,\ldots,p-2$. When $j=0,1,\ldots,k$, we have that $a'_{p-1,j}=0$. If we reduce $a'_{j}(x)$ modulo $M_p(x)$, we obtain $a'_{j}(x)$ itself, of which the coefficients are the bits of column $j$ of $\textsf{EVENODD}(p,k,r;\mathbf{g}(k))$, for $j=0,1,\ldots,k$. Recall that the coefficients of $a'_{k+\ell}(x)$ for $\ell=1,2,\ldots,r-1$ are computed by \eqref{eq:evenodd1}. If we reduce $a'_{k+\ell}(x)$ modulo $M_p(x)$, i.e., replace the coefficients $a'_{i,k+\ell}$ for $i=0,1,\ldots,p-2$ by $a'_{i,k+\ell}+a'_{p-1,k+\ell}$, which are $a_{i,k+\ell}$ that are bits in column $k+\ell$ of $\textsf{EVENODD}(p,k,r;\mathbf{g}(k))$.
In fact, we have shown how to convert augmented array of $\textsf{EVENODD}(p,k,r;\mathbf{g}(k))$ into original array of $\textsf{EVENODD}(p,k,r;\mathbf{g}(k))$.


By Proposition \ref{prop:shorten}, we can obtain the augmented array of $\textsf{RDP}(p,k,r;\mathbf{g}(k+1))$ by multiplying
\[
(b_0(x),  \cdots,  b_{k-1}(x),\sum_{j=0}^{k-1}b_{j}(x)) 
\begin{bmatrix}
\mathbf{I}_{k+1} & \mathbf{V}_{(k+1)\times r}(\mathbf{g}(k+1))
\end{bmatrix},
\]
and removing the $k+2$-th component, which is always equal to zero, in the resultant product. If we arrange all coefficients in the polynomials with degree strictly less than $p-1$, we get the original $(p-1)\times(k+r)$ array of $\textsf{RDP}(p,k,r;\mathbf{g}(k+1))$.



When $0\leq g(i) \leq k-1$ for $i=0,1,\ldots,k-1$, the MDS property condition of $\textsf{EVENODD}(p,k,r;\textbf{g}(k))$ is the same as that of the extended EVENODD codes \cite{blaum1996mds}, and the MDS property condition of $r\leq 8$ and $r\geq 9$ was given in \cite{blaum1996mds} and \cite{Hou2016On}, respectively. Note that the MDS property condition depends on that 2 is a primitive element in $\mathbb{F}_p$. 
This is the reason of the assumption of primitivity of 2 in $\mathbb{F}_p$.
In the following of the paper, we assume that $0\leq g(i) \leq k-1$ with $i=0,1,\ldots,k-1$ for $\textsf{EVENODD}(p,k,r;\textbf{g}(k))$ and $0\leq g(i) \leq k$ with $i=0,1,\ldots,k$ for $\textsf{RDP}(p,k,r;\textbf{g}(k+1))$, and the proposed $\textsf{EVENODD}(p,k,r;\textbf{g}(k))$ and $\textsf{RDP}(p,k,r;\textbf{g}(k+1))$ are MDS codes. We will focus on the erasure decoding for these two codes.

When some columns of $\textsf{EVENODD}(p,k,r;\textbf{g}(k))$ are erased, we assume that the number of erased information columns is no larger than the number of continuous surviving parity columns. Note that one needs to recover the failure columns by downloading $k$ surviving  columns. First, we represent the downloaded $k$ columns by some information polynomials and continuous parity polynomials. Then, we can subtract all the downloaded information polynomials from the parity polynomials to obtain a Vandermonde linear system. 
Although $\textsf{EVENODD}(p,k,r;\textbf{g}(k))$ can be described by the $k\times r$ Vandermonde matrix given in~\eqref{eq:vand} over $\mathbb{F}_2[x]/M_p(x)$ and we can solve the Vandermonde linear system over $\mathbb{F}_2[x]/M_p(x)$ to recover the failure columns, it is more efficient to solve the Vandermonde linear system over $R_p$.
First, we will show in the next section that we can first perform calculation over $R_p$ and then reduce the results modulo $M_p(x)$ in the decoding process. An efficient decoding algorithm to solve Vandermonde linear system over $R_p$ based on LU factorization of Vandermonde matrix is then proposed in Section \ref{sec:dec}.

\section{Vandermonde Matrix over $R_p$}
\label{sec:vand}
Before we focus on the efficient decoding method of $\textsf{EVENODD}(p,k,r;\textbf{g}(k))$ and $\textsf{RDP}(p,k,r;\textbf{g}(k+1))$, we first present some properties of Vandermonde matrix. As the decoding algorithm hinges on a quick method in solving a Vandermonde system of equations over $R_p$, we  discuss some properties of the linear system of Vandermonde matrix over $R_p$ in this section.

Let $\mathbf{V}_{r\times r}(\mathbf{a})$ be an $r\times r$ Vandermonde matrix
\begin{equation}
\mathbf{V}_{r\times r}(\mathbf{a}):=
\begin{bmatrix}
 1&x^{a_1}&\cdots & x^{(r-1)a_1} \\
 1&x^{a_2}&\cdots & x^{(r-1)a_2} \\
  \vdots &\vdots&\ddots & \vdots\\
 1&x^{a_r}& \cdots & x^{(r-1)a_r} \\
 \end{bmatrix},
\label{eq:square-van}
\end{equation}
where $a_1,\ldots,a_r$ are distinct integers such that the difference of each pair of them is relatively prime to $p$. The entries of $\mathbf{V}_{r\times r}(\mathbf{a})$ are considered as polynomials in $R_p$. We investigate the action of multiplication over $\mathbf{V}_{r\times r}(\mathbf{a})$ by defining the function $F:R_{p}^{r}\rightarrow R_{p}^{r}$:
\[
F(\mathbf{u}):=\mathbf{u}\mathbf{V}_{r\times r}(\mathbf{a})
\]
for $\mathbf{u}=(u_1(x),\ldots,u_r(x))\in R_p^r$. Obviously, $F$ is a homomorphism of abelian group and we have $F(\mathbf{u}+\mathbf{u'})=F(\mathbf{u})+F(\mathbf{u'})$ for $\mathbf{u},\mathbf{u'}\in R_p^r$.

The function $F$ is not surjective. If a vector $\mathbf{v}=(v_1(x),v_2(x),\ldots,v_{r}(x))$ is equal to $F(\mathbf{u})$ for some $\mathbf{u}\in R_p$, it is necessary that
\begin{equation}
v_1(1)=v_2(1)=\cdots=v_{r}(1).
\label{eq:terms}
\end{equation}
This is due to the fact that  each polynomial $v_j(x)$ is obtained by adding certain cyclically shifted version of $u_i(x)$'s. In other words, if $\mathbf{v}$ is in the image of $F$, then either there are even number of nonzero terms in  all $v_i(x)$, or there are odd number of nonzero terms in all $v_i(x)$ for $1\le i\le r$.

The function $F$ is also not injective. We can see this by observing that if we add the polynomial $M_p(x)$ to a component of $\mathbf{u}$, for example,  adding $M_p(x)$ to $u_i(x)$, then 
\begin{align*}
F(\mathbf{u}+(\underbrace{0,\ldots,0}_{i-1},M_p(x),0,\ldots,0))
=F(\mathbf{u})+(M_p(x),\ldots,M_p(x)).
\end{align*}
Hence, if we add $M_p(x)$ to two distinct components of input vector $\mathbf{u}$, then the value of $F(\mathbf{u})$ does not change. 
We need the following lemma before discussing the properties of the Vandermonde linear system over $R_p$.

\begin{lemma}\cite[Lemma 2.1]{blaum1996mds}
Suppose that $p$ is an odd number and $d$ is relatively prime to $p$, then $1+x^d$ and $M_p(x)$ are coprime in $\mathbb{F}_2[x]$, and $x^i$ and $M_p(x)$ are relatively prime in $\mathbb{F}_2[x]$ for any positive integer $i$.
\label{lm:coprime}
\end{lemma}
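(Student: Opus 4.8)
The statement splits into two independent claims, and the second is the only one with real content. For the easy half, that $x^i$ and $M_p(x)$ are coprime, the plan is simply to observe that the sole irreducible factor of $x^i$ in $\mathbb{F}_2[x]$ is $x$ itself, and that $x \nmid M_p(x)$ because $M_p(0)=1\neq 0$. Hence $\gcd(x^i,M_p(x))=1$ for every positive integer $i$, with no dependence on $d$ or on the parity of $p$.

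For the coprimality of $1+x^d$ and $M_p(x)$, the key identity I would start from is the telescoping factorization
\[
(1+x)\,M_p(x)=1+x^p \quad\text{in } \mathbb{F}_2[x],
\]
which one checks by expanding $(1+x)(1+x+\cdots+x^{p-1})$ and cancelling the interior terms in pairs over $\mathbb{F}_2$. This shows $M_p(x)\mid 1+x^p$. The second ingredient is the classical gcd identity for binomials, $\gcd(x^m+1,x^n+1)=x^{\gcd(m,n)}+1$ over $\mathbb{F}_2$, which itself follows from the Euclidean algorithm since $x^m+1=x^{m-n}(x^n+1)+(x^{m-n}+1)$ in characteristic $2$ mirrors the subtractive Euclidean step on the exponents. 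Applying it with $m=p$, $n=d$ and using the hypothesis $\gcd(d,p)=1$ gives $\gcd(1+x^p,\,1+x^d)=1+x$.

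I would then combine the two: since $M_p(x)\mid 1+x^p$, any common divisor of $M_p(x)$ and $1+x^d$ must divide $\gcd(1+x^p,1+x^d)=1+x$, so $\gcd(M_p(x),1+x^d)$ divides $1+x$. It therefore remains only to rule out the factor $1+x$, i.e.\ to verify that $1+x\nmid M_p(x)$; this is where the oddness of $p$ enters, via $M_p(1)=p\equiv 1\pmod 2\neq 0$, so that $x=1$ is not a root of $M_p$. Consequently $\gcd(M_p(x),1+x^d)=1$, as desired. The only genuinely nontrivial step is the binomial gcd identity, and the single place the hypotheses are used is in forcing $\gcd(p,d)=1$ (to collapse the gcd to $1+x$) and $p$ odd (to discard $1+x$); everything else is bookkeeping. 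An equivalent route, if one prefers working in an extension field, is to note that a common root $\alpha$ of $M_p$ and $1+x^d$ would satisfy $\alpha^p=1$ (from the factorization), $\alpha\neq 1$ (since $M_p(1)=1$), and $\alpha^d=1$, so its multiplicative order would be a divisor of $\gcd(p,d)=1$ exceeding $1$, a contradiction; this reaches the same conclusion but requires passing to the algebraic closure, which the divisibility argument above avoids.
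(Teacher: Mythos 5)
Your proof is correct. Note, however, that the paper itself offers no proof of this statement: it is imported verbatim as Lemma~2.1 of the cited Blaum--Bruck--Vardy reference, so there is no in-paper argument to compare against. Your divisibility route --- using $(1+x)M_p(x)=1+x^p$, the binomial gcd identity $\gcd(x^m+1,x^n+1)=x^{\gcd(m,n)}+1$, and $M_p(1)=1$ to discard the factor $1+x$ --- is sound and entirely elementary; the ``alternative route'' you sketch via common roots of unity in an extension field is essentially the argument usually given in the array-code literature for this fact, so either version would serve as a self-contained justification here.
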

If the vector $\mathbf{v}$ satisfies  \eqref{eq:terms},  in the next theorem, we show that there are many vectors $\mathbf{u}$ such that $F(\mathbf{u})=\mathbf{v}$.

\begin{theorem}
Let $a_1,a_2,\ldots,a_r$ be $r$ integers such that the difference $a_{i_1}-a_{i_2}$ is relatively prime to $p$ for all pair of distinct indices $1\leq i_1<i_2\leq r$. The image of $F$ consists of all vectors $\mathbf{v}\in R_p^r$ that satisfy the condition \eqref{eq:terms}. For all vectors $\mathbf{u}$ satisfying 
\begin{equation}
\mathbf{u}\mathbf{V}_{r\times r}(\mathbf{a})=\mathbf{v}\mod 1+x^p,
\label{eq:vanrp}
\end{equation}
they are congruent to each other modulo $M_p(x)$.
\label{thm:unique}
\end{theorem}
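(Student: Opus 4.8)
The plan is to analyze $F$ through the Chinese Remainder Theorem decomposition of $R_p$. Since $p$ is odd we have the factorization $1+x^p = (1+x)M_p(x)$ in $\mathbb{F}_2[x]$, and because $M_p(1)=p\equiv 1\pmod 2$ the two factors $1+x$ and $M_p(x)$ are coprime. Hence $R_p \cong \mathbb{F}_2[x]/(1+x)\times \mathbb{F}_2[x]/M_p(x)$, and taking $r$-fold products this gives $R_p^r \cong \mathbb{F}_2^r \times (\mathbb{F}_2[x]/M_p(x))^r$, where the first factor records the evaluations $v_j(1)$. Because $F$ is right-multiplication by a fixed matrix and the CRT map is a ring isomorphism, $F$ respects this product decomposition, so I will study it componentwise as $F_1\times F_2$, where $F_1$ is multiplication by $\mathbf{V}_{r\times r}(\mathbf{a})\bmod (1+x)$ and $F_2$ is multiplication by $\mathbf{V}_{r\times r}(\mathbf{a})\bmod M_p(x)$.

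The key computation is to show that $F_2$ is a bijection. For this I would use the Vandermonde determinant $\det \mathbf{V}_{r\times r}(\mathbf{a})=\prod_{1\le i<j\le r}(x^{a_i}+x^{a_j})$ and argue that it is a unit modulo $M_p(x)$. Each factor can be written $x^{a_i}+x^{a_j}=x^{a_i}(1+x^{a_j-a_i})$; since the exponent $a_j-a_i$ is relatively prime to $p$ by hypothesis (and nonzero modulo $p$ for $i\ne j$), Lemma \ref{lm:coprime} shows $1+x^{a_j-a_i}$ is coprime to $M_p(x)$, and the same lemma shows the power $x^{a_i}$ is coprime to $M_p(x)$. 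Thus every factor, and hence the whole determinant, is invertible in $\mathbb{F}_2[x]/M_p(x)$, so $\mathbf{V}_{r\times r}(\mathbf{a})$ is invertible there and $F_2$ is a bijection with trivial kernel.

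For the other factor, reducing modulo $1+x$ sets $x=1$, so $\mathbf{V}_{r\times r}(\mathbf{a})$ collapses to the all-ones matrix $\mathbf{J}$ over $\mathbb{F}_2$. The map $\mathbf{u}\mapsto \mathbf{u}\mathbf{J}$ sends $(u_1,\ldots,u_r)$ to the constant vector $(\sum_i u_i,\ldots,\sum_i u_i)$, so its image is exactly the ``diagonal'' of equal-coordinate vectors and its kernel is the codimension-one subspace $\{\mathbf{u}:\sum_i u_i=0\}$. Translating the diagonal back through CRT, a vector $\mathbf{v}$ lies in the image of $F$ if and only if its images under evaluation at $1$ coincide, which is precisely the condition $v_1(1)=\cdots=v_r(1)$ in \eqref{eq:terms}; combined with the surjectivity of $F_2$ on the other factor, this establishes the image characterization.

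Finally, for the congruence statement I would read off the kernel of $F$. Since $\ker F_2=0$ and $\ker F_1$ lives entirely in the $\mathbb{F}_2[x]/(1+x)$ factor, any two solutions $\mathbf{u},\mathbf{u}'$ of \eqref{eq:vanrp} differ by an element of $\ker F$ whose $M_p(x)$-coordinate is zero, i.e. $\mathbf{u}-\mathbf{u}'\equiv 0\pmod{M_p(x)}$, giving $\mathbf{u}\equiv\mathbf{u}'\pmod{M_p(x)}$. I expect the main obstacle to be the determinant step: one must handle the exponent $a_j-a_i$ modulo $p$ correctly so that Lemma \ref{lm:coprime} applies, and confirm that being a \emph{unit} modulo $M_p(x)$ — rather than merely nonzero in a field — is what guarantees invertibility of $\mathbf{V}_{r\times r}(\mathbf{a})$ over the ring $\mathbb{F}_2[x]/M_p(x)$, which need not be a field.
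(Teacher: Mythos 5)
Your proposal is correct and follows essentially the same route as the paper's proof: the CRT splitting $R_p\cong \mathbb{F}_2[x]/(1+x)\times\mathbb{F}_2[x]/M_p(x)$, the collapse of the Vandermonde matrix to the all-ones matrix modulo $1+x$, and the invertibility of $\det\mathbf{V}_{r\times r}(\mathbf{a})=\prod_{i<j}x^{a_i}(1+x^{a_j-a_i})$ modulo $M_p(x)$ via Lemma~\ref{lm:coprime}. Your image/kernel phrasing handles the two parity cases $v_i(1)=0$ and $v_i(1)=1$ uniformly, where the paper instead constructs the explicit solution set and treats the two cases separately, but this is a presentational difference rather than a different argument.
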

\begin{proof}
Suppose that $v_1(x),\ldots,v_{r}(x)$ are polynomials in $R_p$ satisfying \eqref{eq:terms}. We want to show that the vector $\mathbf{v}=(v_{1}(x),\ldots,v_{r}(x))$ is in the image of $F$. We first consider the case that $v_1(1)=v_2(1)=\cdots=v_{r}(1)=0$.
 Since $1+x$ and $M_p(x)$ are relatively prime polynomials, by Chinese remainder theorem, we have an isomorphism
 \[
 \theta(f(x))=(f(x)\bmod 1+x,f(x)\bmod M_p(x))
 \]
 defined for $f(x)\in R_p$. The inverse of $\theta$ is given by
 \[
 \theta^{-1}(a(x),b(x))
=M_p(x)a(x)+(1+M_p(x))b(x) \bmod 1+x^p,
 \]
where $a(x)\in\mathbb{F}_2[x]/(1+x)$ and $b(x)\in\mathbb{F}_2[x]/M_p(x)$. We thus have a decomposition of the ring $R_p$ as a direct sum of $\mathbb{F}_2[x]/(1+x)$ and $\mathbb{F}_2[x]/M_p(x)$. It suffices to investigate the action of multiplication over $\mathbf{V}_{r\times r}(\mathbf{a})$ by considering 
 \begin{equation}
\mathbf{u}\mathbf{V}_{r\times r}(\mathbf{a})=\mathbf{v}\mod 1+x, \text{ and }
\label{eq:factor1}
 \end{equation}
 \begin{equation}
\mathbf{u}\mathbf{V}_{r\times r}(\mathbf{a})=\mathbf{v}\mod M_p(x).
\label{eq:factor2}
\end{equation}
Note that \eqref{eq:factor1} is equivalent to 
 \[
(\mathbf{u} \bmod (1+x)) \cdot (\mathbf{V}_{r\times r}(\mathbf{a})\bmod (1+x))=\mathbf{v}\bmod (1+x).
 \] 
Also $\mathbf{V}_{r\times r}(\mathbf{a})\bmod (1+x)$ is an $r\times r$ all one matrix and $\mathbf{v}\bmod (1+x)=\mathbf{0}$ because $v_i(1)=0,\ 1\le i\le r$. It is sufficient to find $r$ components of solution $\mathbf{u}'$ from binary field such that their summation is zero. Therefore, there are many solutions $\mathbf{u}'$ and each solution satisfies that the number of one among all the components of  $\mathbf{u}'$ is an even number.

 For \eqref{eq:factor2}, we need to show that the determinant of $\mathbf{V}_{r\times r}(\mathbf{a})$ is invertible modulo $M_p(x)$. Since the determinant of $\mathbf{V}_{r\times r}(\mathbf{a})$ is\footnote{Since $-1$ is the same as $1$ in $\mathbb{F}_2$, we replace $-1$ with $1$ in this work and addition is the same as subtraction.}
 \[
 \det(\mathbf{V}_{r\times r}(\mathbf{a}))=\prod_{i_1<i_2}(x^{a_{i_1}}+x^{a_{i_2}}),
 \]
 we need to show that $x^{a_{i_1}}+x^{a_{i_2}}$ and $M_p(x)$ are relatively prime polynomials in $\mathbb{F}_2[x]$, for all pairs of distinct $(i_1,i_2)$. We first factorize $x^{a_{i_1}}+x^{a_{i_2}}$ in the form $x^{a_{i_1}}+x^{a_{i_2}}=x^{a_{i_1}}(1+x^{d})$, 
 and by assumption, $d$ is an integer which is coprime with $p$. This problem now reduces to showing that (i) $1+x^d$ and $M_p(x)$ are relatively prime in $\mathbb{F}_2[x]$ whenever $\gcd(d,p)=1$, and (ii) $x^{\ell}$ and $M_p(x)$ are relatively prime in $\mathbb{F}_2[x]$ for all integer $\ell$. We can show this by Lemma \ref{lm:coprime}.
We can thus solve the equation in \eqref{eq:factor2}, say, by Cramer's rule, to obtain the unique solution $\mathbf{u}''$. 
 After obtaining the solutions $u'_i(x)\in \mathbb{F}_2[x]/(1+x)$ and $u''_i(x)\in \mathbb{F}_2[x]/M_p(x)$ in \eqref{eq:factor1} and \eqref{eq:factor2} respectively for all $i$, we can calculate the solution via the isomorphism $\theta^{-1}$ 
\begin{eqnarray}
\theta^{-1}(u'_i(x),u''_i(x))
=M_p(x)u'_i(x)+(1+M_p(x))u''_i(x) \bmod 1+x^p. \nonumber
\end{eqnarray}
Therefore, the solutions of $\mathbf{u}\in R_p^r$ in \eqref{eq:vanrp} are 
\begin{equation}
((1+M_p(x))u''_1(x)+\epsilon_1 M_p(x),\cdots,(1+M_p(x))u''_r(x)+\epsilon_r M_p(x)),
\label{eq:solution}
\end{equation}
where $\epsilon_i$ is equal to 0 or 1 for all $i$ and the number of ones is an even number. That is to say, there are many solutions in \eqref{eq:vanrp} and all the solutions after reducing modulo $M_p(x)$ is unique and is the solution in \eqref{eq:factor2}.

When $v_1(1)=v_2(1)=\cdots=v_r(1)=1$, similar argument can be applied to find the solution. The only difference between this solution and \eqref{eq:solution} is that the number of ones among all $\epsilon_i$ in this solution is odd. This completes the proof.
\end{proof}

From the above theorem, whenever the vector $\mathbf{v}$ satisfies the condition \eqref{eq:terms}, we can decode one solution of $\mathbf{u}$ in \eqref{eq:vanrp}. 
Recall that, for augmented array of EVENODD codes, every component  in $\mathbf{u}$ has been added zero coefficient of the term with degree $p-1$. Hence, each component of the real solution is with at most degree $p-2$. By the theorem,  reducing $\mathbf{u}$ modulo  $M_p(x)$ gives us the final solution. 
Therefore, to solve $\mathbf{u}$ in \eqref{eq:factor2}, we can first solve $\mathbf{u}$ over $R_p$ and then take the modulo $M_p(x)$ for every component of $\mathbf{u}$. This will be demonstrated in next section.

\section{Efficient Decoding of Vandermonde System over $R_p$}
\label{sec:dec} 
In this section, we will present an efficient decoding method of the Vandermonde system over $R_p$ based on LU factorization of the Vandermonde matrix.
 
\subsection{Efficient Division by $1+x^d$}

We want to first present two decoding algorithms for performing division by $1+x^d$ which will be used in the decoding process, where $d$ is a positive integer that is coprime with $p$ and all operations are performed in the ring $R_p=\mathbb{F}_2[x]/(1+x^p)$. Given the equation 
\begin{equation}
(1+x^d)g(x)=f(x) \bmod 1+x^p,
\label{eq;div1}
\end{equation} 
where $d$ is a positive integer such that $\gcd(d,p)=1$ and $f(x)$ has even number of nonzero terms. One method to compute $g(x)$ is given in Lemma 8 in \cite{Hou2017new}, which is summarized as follows.

\begin{lemma}\cite[Lemma 8]{Hou2017new}
The coefficients of $g(x)$ in \eqref{eq;div1} can be computed by  
\begin{eqnarray}
&g_{p-1}=0, g_{p-d-1}=f_{p-1}, g_{d-1}=f_{d-1}, \nonumber\\ 
&g_{(p-(i+1)d-1) }=f_{(p-id-1) }+g_{(p-id-1) }  \text{ for } i=1,\ldots,p-3, \nonumber
\end{eqnarray}
where $g(x)=\sum_{i=0}^{p-1}g_i x^i,f(x)=\sum_{i=0}^{p-1}f_i x^i$.
\label{lm:div}
\end{lemma}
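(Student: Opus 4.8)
The plan is to reduce the polynomial identity \eqref{eq;div1} to a system of scalar relations among the coefficients of $g(x)$ and $f(x)$, and then to solve that system by telescoping along a single cycle of indices. First I would expand the left-hand side in $R_p$. Since $x^p=1$ in $R_p$, multiplication by $x^d$ is a cyclic shift of the coefficient vector, so the coefficient of $x^j$ in $(1+x^d)g(x)$ equals $g_j+g_{(j-d)\bmod p}$. Equating with $f(x)=\sum_j f_j x^j$ yields
\[
f_j=g_j+g_{j-d}\qquad(j=0,1,\ldots,p-1),
\]
where all indices are read modulo $p$ and all arithmetic is over $\mathbb{F}_2$.

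Next I would record the solvability condition. Summing the $p$ relations above gives $\sum_j f_j=2\sum_j g_j=0$ in $\mathbb{F}_2$, i.e. $f(x)$ must have an even number of nonzero terms; this is exactly the hypothesis, and it will re-enter as the closure condition at the end. To solve the system I use that $\gcd(d,p)=1$, so the map $i\mapsto(p-1-id)\bmod p$ is a bijection of $\{0,1,\ldots,p-1\}$ onto itself; equivalently, the index chain $p-1,\,p-1-d,\,p-1-2d,\ldots$ (steps of $-d$ modulo $p$) visits every coefficient exactly once before returning to $p-1$.

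I would then fix the single free coefficient by the normalization $g_{p-1}=0$. This is legitimate because the kernel of multiplication by $1+x^d$ on $R_p$ is $\{0,M_p(x)\}$ (consistent with the earlier observation that adding $M_p(x)$ does not change $F$), so the solution is determined once one coefficient is pinned down. Rewriting the relation as $g_{j-d}=f_j+g_j$ and substituting $j=p-id-1$ gives precisely the stated recurrence $g_{p-(i+1)d-1}=f_{p-id-1}+g_{p-id-1}$. The two base values fall out of the same relation together with $g_{p-1}=0$: taking $j=p-1$ gives $g_{p-d-1}=f_{p-1}+g_{p-1}=f_{p-1}$, and taking $j=d-1$ gives $g_{d-1}=f_{d-1}+g_{-1}=f_{d-1}+g_{p-1}=f_{d-1}$, matching the claimed initial conditions.

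Finally I would verify consistency. The recurrence, run for $i=0,1,\ldots,p-3$ together with the normalization $g_{p-1}=0$, produces $p-1$ of the $p$ coefficients (all of them except $g_{d-1}$), and $g_{d-1}$ is instead assigned directly by the base formula. The one relation of the system not used to \emph{define} a new coefficient must still be satisfied by the computed $g(x)$; telescoping around the full cycle shows that this leftover equation collapses to $\sum_j f_j=0$. The main obstacle is exactly this closure step: one must check that the value the cycle forces on the last coefficient agrees with the independently assigned value $g_{d-1}=f_{d-1}$, and this agreement is equivalent to $f(1)=0$, i.e. to the even-weight hypothesis on $f$. Apart from this, the only care required is the modular index bookkeeping, which is fully controlled by $\gcd(d,p)=1$.
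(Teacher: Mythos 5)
The paper does not prove this lemma at all --- it is imported verbatim as Lemma~8 of the cited reference [Hou2017new], so there is no in-paper argument to compare against. Judged on its own, your derivation is correct and complete: reducing $(1+x^d)g(x)=f(x)$ in $R_p$ to the coefficient relations $f_j=g_j+g_{j-d\bmod p}$, using $\gcd(d,p)=1$ to see that the index chain $p-1,\,p-1-d,\,p-1-2d,\dots$ is a single $p$-cycle, pinning $g_{p-1}=0$ (legitimate since the kernel of multiplication by $1+x^d$ is $\{0,M_p(x)\}$, so each coset contains exactly one representative with vanishing top coefficient), and then reading off the two base cases from $j=p-1$ and $j=d-1$ and the recurrence from $j=p-id-1$ reproduces exactly the stated formulas. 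Your closure check is also the right one and is the only nontrivial point: the one relation not used to define a coefficient (the one at $j=2d-1\bmod p$, linking $g_{2d-1}$ to $g_{d-1}$) telescopes around the cycle to the condition $f(1)=0$, which is precisely the even-weight hypothesis on $f$; this confirms both consistency and why that hypothesis is needed. The argument also makes the $p-3$ XOR count of the surrounding text transparent, since only the $p-3$ recurrence steps cost an addition.
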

Although computing the division in \eqref{eq;div1} by Lemma \ref{lm:div} only takes $p-3$ XORs, we do not know whether the resulting polynomial $g(x)$ has even number of nonzero terms or not. In solving the Vandermonde linear system in the next subsection, we need to compute many divisions of the form in~\eqref{eq;div1}. If we do not require that the solved polynomial $g(x)$ should have even number of nonzero terms, then we can employ Lemma~\ref{lm:div} to solve the division. Otherwise, Lemma~\ref{lm:div} is not applicable to such division. Therefore, we need the following lemma that can compute the division when $g(x)$ is required to have even number of nonzero terms.
\begin{lemma}\cite[Lemma 13]{Hou2016BASIC}
Given the equation in \eqref{eq;div1}, we can compute the coefficient $g_0$ by
\begin{equation}
g_0=f_{2d}+f_{4d}+\cdots+f_{(p-1)d},
\label{eq:invcmpt}
\end{equation}
and the other coefficients of $g(x)$ can be iteratively computed by
\begin{equation}
g_{d\ell }=f_{d\ell}+g_{d(\ell-1)} \text{ for } \ell=1,2,\ldots,p-1.
\label{eq:invcmpt2}
\end{equation}
\label{lm:div1}
\end{lemma}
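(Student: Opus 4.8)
The plan is to prove the two formulas by comparing coefficients directly in \eqref{eq;div1} and then pinning down the one remaining degree of freedom using the even-weight hypothesis on $f(x)$.

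First I would expand the left-hand side of \eqref{eq;div1} as $(1+x^d)g(x)=g(x)+x^d g(x) \bmod 1+x^p$ and match the coefficient of $x^{d\ell \bmod p}$ on both sides. Since multiplication by $x^d$ shifts coefficients cyclically by $d$ (we work modulo $1+x^p$), the coefficient of $x^{d\ell}$ contributed by $x^d g(x)$ is $g_{d(\ell-1)}$, so the left-hand coefficient is $g_{d\ell}+g_{d(\ell-1)}$, which must equal $f_{d\ell}$. Over $\mathbb{F}_2$ this is exactly the recursion \eqref{eq:invcmpt2}, namely $g_{d\ell}=f_{d\ell}+g_{d(\ell-1)}$. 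Because $\gcd(d,p)=1$, the map $\ell\mapsto d\ell \bmod p$ is a bijection of $\{0,1,\ldots,p-1\}$, so as $\ell$ runs over $1,\ldots,p-1$ the recursion reaches every coefficient of $g(x)$ exactly once, leaving $g_0$ as the single undetermined starting value.

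Next I would telescope the recursion into the closed form $g_{d\ell}=g_0+\sum_{j=1}^{\ell} f_{dj}$ for $0\le \ell\le p-1$. The recursion is cyclic: closing the loop at $\ell=p$ (index back to $0$) returns the consistency condition $\sum_{i=0}^{p-1} f_i=0$, which is precisely the hypothesis that $f(x)$ has an even number of nonzero terms. This guarantees a solution exists — consistent with Lemma \ref{lm:coprime} and Theorem \ref{thm:unique} — but it does not determine $g_0$, since two solutions in $R_p$ differ by $M_p(x)$, which flips $g_0$. To select the intended solution I would impose that $g(x)$ itself has an even number of nonzero terms, i.e.\ $g(1)=\sum_i g_i=0$.

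The crux is evaluating $g(1)$. Summing the closed form over $\ell$ and using the bijection above, $\sum_i g_i=\sum_{\ell=0}^{p-1} g_{d\ell}=p\,g_0+\sum_{j=1}^{p-1}(p-j)f_{dj}$. Since $p$ is odd, $p\,g_0=g_0$ and $(p-j)\equiv 1+j \pmod 2$, so only the terms with even $j$ survive, giving $g(1)=g_0+(f_{2d}+f_{4d}+\cdots+f_{(p-1)d})$. Forcing $g(1)=0$ then yields $g_0=f_{2d}+f_{4d}+\cdots+f_{(p-1)d}$, which is \eqref{eq:invcmpt}. I expect this weighted-sum step — showing that only the even multiples of $d$ contribute, a fact that rests essentially on $p$ being odd — to be the main obstacle, whereas the coefficient matching and the telescoping are routine.
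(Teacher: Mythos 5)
Your proof is correct. Note that the paper does not actually prove this lemma --- it is imported by citation from [Hou2016BASIC, Lemma 13], and the only justification offered in the text is the brief remark that $g(1)=0$ can be checked by adding \eqref{eq:invcmpt} to the instances of \eqref{eq:invcmpt2} with $\ell=2,4,\ldots,p-1$. Your argument is a complete, self-contained derivation: the coefficient matching at $x^{d\ell}$ gives \eqref{eq:invcmpt2} as a necessary condition; the bijectivity of $\ell\mapsto d\ell \bmod p$ shows $g_0$ is the only free parameter; the telescoped consistency condition at $\ell=p$ recovers the hypothesis $f(1)=0$; and the weighted sum $\sum_{\ell} g_{d\ell}=p\,g_0+\sum_{j=1}^{p-1}(p-j)f_{dj}$, reduced mod $2$ using the oddness of $p$, pins down $g_0$ as in \eqref{eq:invcmpt} once one imposes $g(1)=0$ (which is indeed the property distinguishing this division routine from the one in Lemma \ref{lm:div}, as the surrounding discussion of the paper makes clear). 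Your final weighted-sum step is just the expanded form of the paper's one-line parity check, so the two are consistent; yours additionally establishes existence and uniqueness of the even-weight solution rather than merely verifying the stated formulas.
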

\vspace{-0.3cm}
Note that the subscripts in Lemma \ref{lm:div1} are taken modulo $p$. As $\gcd(d,p)=1$, we have that
\[
\{0,d,2d,\cdots,(p-1)d\}=\{0,1,2,\cdots,p-1\} \bmod p.
\]
Therefore, we can compute all the coefficients of $g(x)$ by Lemma~\ref{lm:div1}. The result in Lemma \ref{lm:div1} has been observed in \cite{Hou2014New,Hou2016BASIC}. We can check that the computed polynomial $g(x)$ satisfies $g(1)=0$, by adding the equation in \eqref{eq:invcmpt} and the equations in \eqref{eq:invcmpt2} for $\ell=2,4,\ldots,p-1$. The number of XORs required in computing the division by Lemma \ref{lm:div1} is $(3p-5)/2$.


For the same parameters, the computed $g(x)$ by Lemma \ref{lm:div} is equal to either the computed $g(x)$ by Lemma \ref{lm:div1} or the summation of $M_p(x)$ and the computed $g(x)$ by Lemma \ref{lm:div1}, depending on whether the computed $g(x)$ by Lemma \ref{lm:div} has even number of nonzero terms or not. When solving a division in \eqref{eq;div1}, we prefer the method in Lemma \ref{lm:div} if there is no requirement that the resulting polynomial $g(x)$ should have even number of nonzero terms, as the method in Lemma~\ref{lm:div} involves less XORs.

\subsection{LU Method of Vandermonde Systems over $R_p$}
\label{sec:lu}
The next theorem is the core of the fast LU method for solving Vandermonde system of equations $\mathbf{v}=\mathbf{u}\mathbf{V}_{r\times r}(\mathbf{a})$ which is based on the LU decomposition of a Vandermonde matrix given in~\cite{yang2005lu}. 
\begin{theorem}~\cite{yang2005lu}
For positive integer $r$, the square Vandermonde matrix $\mathbf{V}_{r\times r}(\mathbf{a})$ defined in \eqref{eq:square-van} can be factorized into
\[
\mathbf{V}_{r\times r}(\mathbf{a})=\mathbf{L}_{r}^{(1)}\mathbf{L}_{r}^{(2)}\cdots \mathbf{L}_{r}^{(r-1)}\mathbf{U}_{r}^{(r-1)}\mathbf{U}_{r}^{(r-2)}\cdots\mathbf{U}_{r}^{(1)},
\]
where $\mathbf{U}_r^{(\ell)}$ is the upper triangular matrix
\begin{align*}
\mathbf{U}_r^{(\ell)}:=
\left[\begin{array}{c|cccccc}
                             \mathbf{I}_{r-\ell-1} & & &  \text{\LARGE 0} & & &  \\
                             \hline
                               & 1 & x^{a_1} & 0 & \cdots & 0 & 0\\
                               & 0 & 1 & x^{a_2} & \cdots & 0 & 0\\
                                \text{\LARGE 0}&  \vdots & \vdots & \vdots & \ddots & \vdots & \vdots \\
                              & 0 & 0 & 0 & \cdots & 1 & x^{a_\ell} \\  
                              & 0 & 0 & 0 & \cdots & 0 & 1   
                           \end{array}\right]
\end{align*}
and $\mathbf{L}_r^{(\ell)}$ is the lower triangular matrix
\[
\small
\left[\begin{array}{c|ccccc}
                             \mathbf{I}_{r-\ell-1} & &  &\text{\large 0}  & & \\
                             \hline
                               & 1  & 0  & \cdots & 0 & 0  \\
                               & 1  & x^{a_{r-\ell+1}}+x^{a_{r-\ell}}  & \cdots & 0 & 0  \\
                                \text{\large 0}&  \vdots & \vdots  & \ddots & \vdots & \vdots \\
                               & 0 & 0  & \cdots & x^{a_{r-1}}+x^{a_{r-\ell}} & 0 \\  
                               & 0 & 0 & \cdots & 1 & x^{a_{r}}+x^{a_{r-\ell}}   
                           \end{array}\right]
\]
for $\ell=1,2,\ldots,r-1$.
\label{thm:factor}
\end{theorem}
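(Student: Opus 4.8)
The plan is to prove the factorization as an identity of polynomials in the quantities $y_i := x^{a_i}$. Every entry of $\mathbf{V}_{r\times r}(\mathbf{a})$ and of each elementary factor $\mathbf{U}_r^{(\ell)},\mathbf{L}_r^{(\ell)}$ is a polynomial in $y_1,\ldots,y_r$ with coefficients in $\mathbb{F}_2$, so it suffices to verify the claimed matrix equation in the polynomial ring $\mathbb{F}_2[y_1,\ldots,y_r]$ with $y_i$ treated as indeterminates, and then substitute $y_i\mapsto x^{a_i}$; this substitution is a ring homomorphism and carries the identity into $R_p$. In particular the factorization uses nothing about the coprimality of the differences $a_{i_1}-a_{i_2}$ with $p$, which is only needed later to make the factors invertible. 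Throughout, subtraction equals addition in $\mathbb{F}_2$, so I write $y_i+y_m$ for the differences $y_i-y_m$. Rather than inducting directly on the full matrix equation, I would first compute the two one-sided products in closed form and then multiply them.

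First I would determine the upper factor $\mathbf{U}:=\mathbf{U}_r^{(r-1)}\mathbf{U}_r^{(r-2)}\cdots\mathbf{U}_r^{(1)}$ and the lower factor $\mathbf{L}:=\mathbf{L}_r^{(1)}\mathbf{L}_r^{(2)}\cdots\mathbf{L}_r^{(r-1)}$ by induction on $r$. The key structural observation is that the elementary factors nest: for $\ell\le r-2$ one has $\mathbf{U}_r^{(\ell)}=1\oplus\mathbf{U}_{r-1}^{(\ell)}$ (same nodes) and $\mathbf{L}_r^{(\ell)}=1\oplus\mathbf{L}_{r-1}^{(\ell)}(a_2,\ldots,a_r)$ (node set shifted up by one), where $1\oplus(\cdot)$ denotes a leading $1$ in the top-left corner. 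Hence $\mathbf{U}=\mathbf{U}_r^{(r-1)}\,(1\oplus\mathbf{U}')$ and $\mathbf{L}=(1\oplus\mathbf{L}'')\,\mathbf{L}_r^{(r-1)}$, where $\mathbf{U}'$ and $\mathbf{L}''$ are the analogous products of size $r-1$ to which the induction hypothesis applies. Multiplying by the single remaining bidiagonal factor and invoking the Pascal-type recurrence $h_n(y_1,\ldots,y_i)=h_n(y_1,\ldots,y_{i-1})+y_i\,h_{n-1}(y_1,\ldots,y_i)$ for the complete homogeneous symmetric polynomials $h_n$, I expect to obtain the closed forms
\[
U_{ij}=h_{j-i}(y_1,\ldots,y_i),\qquad L_{ij}=\prod_{m=1}^{j-1}(y_i+y_m),
\]
with the conventions $h_0=1$, $h_n=0$ for $n<0$, and empty product equal to $1$; in particular $\mathbf{U}$ is upper unitriangular and $\mathbf{L}$ is lower triangular. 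The superdiagonal of the last $\mathbf{U}$-factor is precisely what realizes one step of the $h$-recurrence, which is why this induction goes through cleanly.

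It then remains to check $\mathbf{L}\mathbf{U}=\mathbf{V}_{r\times r}(\mathbf{a})$ entrywise, that is,
\[
\sum_{k=1}^{j}\Big(\prod_{m=1}^{k-1}(y_i+y_m)\Big)\,h_{j-k}(y_1,\ldots,y_k)=y_i^{\,j-1}.
\]
I would recognize this as the Newton interpolation of the monomial $f(t)=t^{\,j-1}$ in the node sequence $y_1,y_2,\ldots$: the coefficient of $\prod_{m=1}^{k-1}(t-y_m)$ is the $k$-th divided difference $f[y_1,\ldots,y_k]$, and the classical fact that divided differences of a power function are complete homogeneous symmetric polynomials gives $f[y_1,\ldots,y_k]=h_{j-k}(y_1,\ldots,y_k)$; evaluating at $t=y_i$ yields the displayed identity.

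The main obstacle is this final entrywise identity together with the index bookkeeping in the two inductions. The node-set shift in the $\mathbf{L}$-nesting (from $(a_1,\ldots,a_r)$ to $(a_2,\ldots,a_r)$) means the lower and upper factors are built from genuinely different node subsets, so the two inductions cannot be collapsed into a single induction on the bare matrix equation; keeping these shifted indices straight is the delicate bookkeeping part. The divided-difference/symmetric-function identity is the only non-mechanical ingredient, and once it is in place the homomorphic specialization $y_i\mapsto x^{a_i}$ immediately transfers the result to $R_p$, completing the proof.
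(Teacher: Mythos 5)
The paper does not actually prove Theorem~\ref{thm:factor}: it is imported verbatim from \cite{yang2005lu} and used as a black box, so there is no internal proof to compare against. Your argument is correct and self-contained, and it essentially reconstructs the proof in the cited reference: the bidiagonal factors assemble (by the two nested inductions you describe) into a unit upper-triangular matrix with entries $U_{ij}=h_{j-i}(y_1,\ldots,y_i)$ and a lower-triangular matrix with entries $L_{ij}=\prod_{m=1}^{j-1}(y_i+y_m)$, and the product $\mathbf{L}\mathbf{U}=\mathbf{V}$ is exactly the Newton-form expansion of $t^{j-1}$ at the nodes $y_1,y_2,\ldots$ evaluated at $t=y_i$; I checked the closed forms and the index bookkeeping (including the node shift $(a_1,\ldots,a_r)\mapsto(a_2,\ldots,a_r)$ in the $\mathbf{L}$-nesting) for small $r$ and they are consistent with the stated factors. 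Two remarks. First, your observation that the identity lives in $\mathbb{F}_2[y_1,\ldots,y_r]$ and needs no coprimality hypothesis is a genuine clarification the paper never makes: the conditions $\gcd(a_{i_1}-a_{i_2},p)=1$ enter only later, in Theorem~\ref{thm:unique} and Algorithm~\ref{alg:Vand}, to invert the factors. Second, the one place to be slightly careful is the appeal to divided differences: these are defined with denominators $y_i-y_m$, which are not invertible in $\mathbb{F}_2[y_1,\ldots,y_r]$, so you should state that the identity $t^{j-1}=\sum_{k=1}^{j}h_{j-k}(y_1,\ldots,y_k)\prod_{m=1}^{k-1}(t-y_m)$ is a denominator-free polynomial identity over $\mathbb{Z}$ (provable there via the fraction field) and therefore specializes to characteristic $2$; your framing already accommodates this, but the sentence invoking the ``classical fact'' should say it explicitly. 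Also note that the sum correctly truncates at $k=\min(i,j)$ because $\prod_{m=1}^{k-1}(y_i+y_m)$ contains the factor $y_i+y_i=0$ once $k>i$, which is what makes the single formula valid for all entries.
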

For example, the Vandermonde matrix $\mathbf{V}_{3\times 3}(1,x,x^4)$ can be factorized as
\begin{align*}
\mathbf{L}_{3}^{(1)}\mathbf{L}_{3}^{(2)}\mathbf{U}_{3}^{(2)}\mathbf{U}_{3}^{(1)}=\begin{bmatrix}1 & 0 & 0\\
0 & 1 & 0 \\
0 & 1 & x^{4}+x 
\end{bmatrix}\begin{bmatrix}1 & 0 & 0\\
1 & x+1 & 0 \\
0 & 1 & x^4+1 
\end{bmatrix}
\begin{bmatrix}1 & 1 & 0\\
0 & 1 & x \\
0 & 0 & 1 
\end{bmatrix}\begin{bmatrix}1 & 0 & 0\\
0 & 1 & 1 \\
0 & 0 & 1 
\end{bmatrix}.
\end{align*}
Based on the factorization in Theorem \ref{thm:factor}, we have a fast algorithm for solving a Vandermonde system of linear equations. Given the matrix $\mathbf{V}_{r\times r}(\mathbf{a})$ and a row vector $\mathbf{v}=(v_1(x),\ldots,v_r(x))$, we can solve the linear system $\mathbf{u}\mathbf{V}_{r\times r}(\mathbf{a})=\mathbf{v}$ in \eqref{eq:vanrp} by solving
\begin{equation}
\mathbf{u}\mathbf{L}_{r}^{(1)}\mathbf{L}_{r}^{(2)}\cdots \mathbf{L}_{r}^{(r-1)}\mathbf{U}_{r}^{(r-1)}\mathbf{U}_{r}^{(r-2)}\cdots \mathbf{U}_{r}^{(1)}=\mathbf{v}.
\label{eq:factor}
\end{equation}
As the inversion of each of the upper and lower triangular matrices can be done efficiently, we can solve for $\mathbf{u}$ by inverting $2(r-1)$ triangular matrices.

\begin{algorithm}
\caption{Solving a Vandermonde linear system.} \label{alg:Vand}
{\bf Inputs:}
\quad positive integer $r$, odd integer $p$, integers $a_1,a_2,\ldots, a_{r}$, and $\mathbf{v} = (v_1(x),v_2(x),\ldots, v_r(x))\in R_p^r$.\\
{\bf Output:}
\quad $\mathbf{u} = (u_1(x),\ldots, u_r(x))$ that satisfies $\mathbf{u}\mathbf{V}_{r\times r}(\mathbf{a})=\mathbf{v}$. 
\begin{algorithmic}[1]
\REQUIRE  $v_1(1)=v_2(1)=\cdots=v_r(1)$, and $\gcd(a_{i_1}-a_{i_2},p)=1$ for all $1\leq i_1<i_2\leq r$.
\STATE $\mathbf{u} \gets \mathbf{v}$.
\FOR{ $i$ from $1$ to $r-1$}
\FOR{ $j$ from $r-i+1$ to $r$}
\STATE $u_{j}(x) \gets  u_{j}(x) +u_{j-1}(x) x^{a_{i+j-r}}$
\ENDFOR
\ENDFOR
\FOR{ $i$ from $r-1$ down to $1$}
\STATE Solve $g(x)$ from $(x^{a_{r}} +x^{a_{r-i}})g(x)= u_{r}(x)$ by Lemma~\ref{lm:div1} or Lemma~\ref{lm:div} (only when $i=1$) \\
 $u_{r}(x) \gets g(x)$ 
\FOR{ $j$ from $r-1$ down to $r-i+1$}
\STATE  Solve $g(x)$ from $( x^{a_{j}} +x^{a_{r-i}} )g(x)=  (u_{j}(x) +u_{j+1}(x))$ by  Lemma~\ref{lm:div1} or Lemma~\ref{lm:div} (only when $i+j=r+1$)\\ $u_{j}(x) \gets g(x)$
\ENDFOR
\STATE  $u_{r-i}(x) \gets u_{r-i}(x) +u_{r-i+1}(x)$
\ENDFOR
\RETURN{$\mathbf{u}=(u_1(x),\ldots,u_r(x))$}
\end{algorithmic}
\end{algorithm}

The procedure of solving a Vandermonde system of linear equations is given in Algorithm \ref{alg:Vand}. In Algorithm \ref{alg:Vand},  steps 2 to 4 are forward additions that require $r(r-1)/2$ additions and $r(r-1)/2$ multiplications. Steps 5 to 9 are backward additions, and require $r(r-1)/2$ additions and $r(r-1)/2$ divisions by factors of the form $x^{a_j}+x^{a_{r-i}}$. Division by $x^{a_j}+x^{a_{r-i}}$ is done by invoking the method in Lemma \ref{lm:div} or Lemma \ref{lm:div1}. We may compute all the divisions by Lemma~\ref{lm:div1}. However, some of the division can be computed by  Lemma \ref{lm:div}, where the computational complexity can be reduced. 

\begin{theorem}
Algorithm \ref{alg:Vand} outputs $\mathbf{u}$ that is one of the solution to $\mathbf{u} \mathbf{V}_{r\times r}(\mathbf{a})=\mathbf{v}$ over $R_p$. Furthermore, $g(x)$ in step 6 when $i=1$, and in step 8 when $i+j=r+1$ and $i\in \{2,3,\ldots,r-1\}$ can be computed by Lemma~\ref{lm:div} to reduce computation complexity. 
\label{thm:vand}
\end{theorem}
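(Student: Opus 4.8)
The plan is to recognize Algorithm~\ref{alg:Vand} as the back-and-forth substitution induced by the factorization $\mathbf{V}_{r\times r}(\mathbf{a})=\mathbf{L}_r^{(1)}\cdots\mathbf{L}_r^{(r-1)}\mathbf{U}_r^{(r-1)}\cdots\mathbf{U}_r^{(1)}$ of Theorem~\ref{thm:factor}, and then to track evaluations at $x=1$ to justify the choice of division routine. First I would match the two phases of the algorithm to the two halves of the factorization. Each $\mathbf{U}_r^{(\ell)}$ is unipotent and bidiagonal, so solving $\mathbf{w}\,\mathbf{U}_r^{(\ell)}=(\text{current vector})$ is pure forward substitution involving only a shift $x^{a_{\cdot}}$ and an XOR; a direct index check (the $(j-1,j)$ superdiagonal entry of $\mathbf{U}_r^{(i)}$ is $x^{a_{i+j-r}}$) shows that the forward loop, for $i=1,\ldots,r-1$, successively applies $(\mathbf{U}_r^{(1)})^{-1},\ldots,(\mathbf{U}_r^{(r-1)})^{-1}$, leaving $\mathbf{w}$ with $\mathbf{w}\,\mathbf{U}_r^{(r-1)}\cdots\mathbf{U}_r^{(1)}=\mathbf{v}$. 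Each $\mathbf{L}_r^{(\ell)}$ is lower bidiagonal with diagonal entries $x^{a_j}+x^{a_{r-\ell}}$ and unit subdiagonal, so solving $\mathbf{u}\,\mathbf{L}_r^{(\ell)}=(\text{current vector})$ is backward substitution whose last-column and interior steps are divisions by $x^{a_j}+x^{a_{r-\ell}}$ (steps~6 and~8) and whose first-column step is a plain XOR, since the diagonal entry there is $1$ (step~10). This identifies the backward loop, for $i=r-1,\ldots,1$, with solving $\mathbf{u}^{(i)}\mathbf{L}_r^{(i)}=\mathbf{u}^{(i+1)}$ (with $\mathbf{u}^{(r)}:=\mathbf{w}$). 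Chaining these exact identities in $R_p$ gives $\mathbf{u}^{(1)}\mathbf{V}_{r\times r}(\mathbf{a})=\mathbf{v}$, so the returned vector is one of the solutions.

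The delicate point is solvability of each division over $R_p$ rather than over the field $\mathbb{F}_2[x]/M_p(x)$. Writing $x^{a_j}+x^{a_{r-i}}=x^{a_{r-i}}(1+x^{d})$ with $d=a_j-a_{r-i}$, the precondition $\gcd(a_{i_1}-a_{i_2},p)=1$ together with Lemma~\ref{lm:coprime} makes $1+x^{d}$ coprime to $M_p(x)$, while $x^{a_{r-i}}$ is a unit; hence both Lemma~\ref{lm:div} and Lemma~\ref{lm:div1} return an exact solution of the division equation in $R_p$ (differing at most by $M_p(x)$) provided the dividend evaluates to $0$ at $x=1$. I would establish this even-weight property of every dividend by reducing modulo $1+x$. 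Using $v_1(1)=\cdots=v_r(1)=:c$ and the induced recursion $\bar u_j\mapsto \bar u_j+\bar u_{j-1}$, a short induction on the forward loop gives $\bar u=(c,0,\ldots,0)$ on entering the backward phase. I would then prove, by reverse induction on $i$, the invariant that at the start of iteration $i$ one has $\bar u_j=0$ for all $j\ge r-i+1$; this makes the step-6 dividend $u_r$ and every step-8 dividend $u_j+u_{j+1}$ even, so all divisions are solvable. The inductive step uses that the non-final quotients of an iteration are computed by Lemma~\ref{lm:div1}, which guarantees $g(1)=0$, thereby carrying the invariant into iteration $i-1$.

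For the second assertion I would pinpoint which quotients are ever reused as a dividend, since only those must be even and hence require Lemma~\ref{lm:div1}. A dependency check shows that the quotient $u_r$ from step~6 at iteration $i\ge2$ is immediately consumed in the step-8 dividend $u_{r-1}+u_r$ of the same iteration, so it must be even; but at $i=1$ the inner loop is empty, so $u_r$ only enters the XOR of step~10 before being returned, and Lemma~\ref{lm:div} may be used. Likewise, the quotient $u_{r-i+1}$ produced by step~8 when $i+j=r+1$ (the smallest-index, hence last, division of iteration $i$) is afterwards touched only by XORs---step~10 of iteration $i$ and step~10 of iteration $i-1$---and never reappears in a dividend, because every division of iterations $i-1,\ldots,1$ involves only indices $\ge r-i+2$. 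Hence Lemma~\ref{lm:div} is also legitimate there. These are precisely the steps named in the theorem.

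The hardest part I expect is the even-weight bookkeeping, which must accomplish two things at once: show that every dividend is even (so the cheaper, field-free division routine is applicable in $R_p$), and show that the possibly-odd quotients returned by Lemma~\ref{lm:div} never contaminate a later dividend. The forward induction yielding $(c,0,\ldots,0)$ and the reverse invariant have to be arranged so that the single Lemma~\ref{lm:div} division allowed in each backward iteration always lands on the one index that is thereafter only XORed, never divided again.
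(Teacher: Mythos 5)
Your proposal is correct and follows essentially the same route as the paper: identify the two loops with back/forward substitution through the $\mathbf{L}_r^{(\ell)}$ and $\mathbf{U}_r^{(\ell)}$ factors of Theorem~\ref{thm:factor}, then track evaluations at $x=1$ to show every dividend is even and that the possibly-odd quotients produced by Lemma~\ref{lm:div} (at $i=1$ in step~6 and $i+j=r+1$ in step~8) are never reused as dividends. Your single reverse-induction invariant ($\bar u_j=0$ for $j\ge r-i+1$) merely packages the paper's two cases $v_i(1)=0$ and $v_i(1)=1$ into one statement; the substance is identical.
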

\begin{proof}
First, we want to show that Algorithm \ref{alg:Vand} implements precisely the matrix multiplications in \eqref{eq:factor}. Consider the linear equations $\mathbf{u}\mathbf{U}_{r}^{(i)}=\mathbf{v}$ for $i=1,2,\ldots,r-1$. According to the upper triangular matrix $\mathbf{U}_{r}^{(i)}$ in Theorem \ref{thm:factor}, we can obtain the relation between $\mathbf{u}$ and $\mathbf{v}$ as 
\begin{align*}
&u_j(x)=v_{j}(x) \text{ for } j=1,\ldots,r-i,\\
&x^{a_{i+j-r}}u_{j-1}(x)+u_{j}(x)=v_{j}(x) \text{ for } j=r-i+1,\ldots,r.
\end{align*}
We can observe from Algorithm \ref{alg:Vand} that steps 1 to 4 solve $\mathbf{u}$ from $\mathbf{u}\mathbf{U}_{r}^{(r-1)}\mathbf{U}_{r}^{(r-2)}\cdots \mathbf{U}_{r}^{(1)}=\mathbf{v}$, and we denote the solved $r$ polynomials as $\mathbf{v}'=(v'_1(x),\ldots,v'_r(x))$. Consider the equations $\mathbf{u}\mathbf{L}_{r}^{(i)}=\mathbf{v}'$. According to the lower triangular matrix $\mathbf{L}_{r}^{(i)}$ in Theorem \ref{thm:factor}, the relation between $\mathbf{u}$ and $\mathbf{v}'$ is as follows:
\begin{eqnarray}
&& u_j(x)=v'_{j}(x) \text{ for } j=1,\ldots,r-i-1,\nonumber \\ 
&& u_{r-i}(x)+u_{r-i+1}(x)=v'_{r-i}(x), \label{eq:eq1}\\
&& (x^{a_{j}}+x^{a_{r-i}})u_j(x)+u_{j+1}(x)=v'_j(x)  \text{ for } j=r-i+1,\ldots,r-1, \label{eq:eq2}\\
&& (x^{a_{r}}+x^{a_{r-i}})u_r(x)=v'_{r}(x).\label{eq:eq3}
\end{eqnarray}
It is easy to see that step 6, step 8 and step 9 solves $u_r(x)$ from \eqref{eq:eq3}, $u_j(x)$ from \eqref{eq:eq2} and $u_{r-i}(x)$ from \eqref{eq:eq1} respectively. We thus obtain that steps 5 to 9 solve $\mathbf{u}$ from $\mathbf{u}\mathbf{L}_{r}^{(1)}\mathbf{L}_{r}^{(2)}\cdots \mathbf{L}_{r}^{(r-1)}=\mathbf{v}'$. Therefore, Algorithm \ref{alg:Vand} precisely computes $\mathbf{u}$ from the matrix multiplication in  \eqref{eq:factor}.

Note that we need to compute $r(r-1)/2$ divisions (solving $g(x)$) in steps 6 and 8 in Algorithm \ref{alg:Vand}, and all divisions can be solved by Lemma \ref{lm:div} or Lemma \ref{lm:div1}. To solve all divisions by Lemma \ref{lm:div} or Lemma \ref{lm:div1}, all polynomials $u_r(x)$ and $u_j(x)+u_{j+1}(x)$ in steps 6 and 8 must have even number of nonzero terms, which is a requirement in computing the divisions. That is $u_r(1)=0$ and $u_j(1)+u_{j+1}(1)=0$ when we calculate $g(x)$ in these steps. Next we show this requirement is ensured during the process of the algorithm.

Consider two cases: $v_i(1)=0$ for all $i$ and $v_i(1)=1$ for all $i$. If $v_i(1)=0$ for all $i$, then we have $u_i(1)=0$ for all $i$ after the double for loops between steps 2 to 4. Hence $u_r(1)=0$ in step 6 when $i=r-1$. In step 6, we need to ensure that all $u_r(x)$ produced  satisfy $u_r(1)=0$.  Since all $g(x)$, hence $u_r(x)$, generated by Lemma~ \ref{lm:div1} have such property, we only need to consider the case that $g(x)$ is generated by Lemma~ \ref{lm:div}, i.e., when $i=1$.   When $i=1$, the  $u_r(x)$ assigned by $g(x)$ will not be invoked in computing the division in steps 7 and 8 as  $r-i+1=r$ which is larger than the initial value $r-1$ of the loop  in step 7. Hence, there is no need to run the for loop in step 7 to step 8.  

Similarly, we need to ensure that $u_j(1)-u_{j+1}(1)=0$ for $j=r-1, r-2,\ldots, r-i-1$. It is sufficient to ensure that $u_j(1)=0$ and $u_{j+1}(1)=0$. Since all $g(x)$, hence $u_j(x)$, generated by Lemma~ \ref{lm:div1} already have had such property, in step 8, we only need to consider the case that $g(x)$ is generated by Lemma~ \ref{lm:div}, i.e., when $i+j=r+1$. Next, we prove that when $i+j=r+1$, the $u_j(x)$ assigned by $g(x)$ in step 8 are not used again to solve the division in step 8. Let $i=t$ ($i=r-1,\ldots,2$). Note that, when $i+j=r+1$, the algorithm is in the final iteration of the for loop in step 7. Hence, we need to prove that $u_j(x)=u_{r-t+1}(x)$ will not be used in the iterations $i<t$.  When $i<t$ in step 7, the last iteration $j=r-i+1>r-t+1$ such that $u_{r-t+1}(x)$ will not be used in the calculation involving $u_j(x)$ and $u_{j+1}(x)$ in step 8.

If $v_i(1)=1$ for all $i$, then after the double for loops between steps 2 to 4, we have
\[
u_1(1)=1, \text{ and } u_2(1)=u_3(1)=\cdots =u_r(1)=0.
\]
In this case, we only need to show that $u_1(x)$ has never been used in the calculation of $g(x)$ in step 8. Note that, in the last iteration of step 7, $j=r-i+1$ has never gone down to $1$ since $i\le r-1$. Hence, $u_1(x)$ has never been used in step 8.
\end{proof}




The next theorem shows the computational complexity in Algorithm \ref{alg:Vand}.
\begin{theorem}
The computation complexity in Algorithm \ref{alg:Vand} is at most
\begin{equation}
r(r-1)p+(r-1)(p-3)+(r-1)(r-2)(3p-5)/4.
\label{eq:lucom}
\end{equation}
\label{thm:lucom}
\end{theorem}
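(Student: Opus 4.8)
The plan is to account separately for the three kinds of operations that Algorithm~\ref{alg:Vand} performs on elements of $R_p$, namely polynomial additions, multiplications by a monomial $x^{a}$, and divisions by a binomial $x^{a_j}+x^{a_{r-i}}$, and then to bound the total XOR count by summing the three contributions and matching them term by term with \eqref{eq:lucom}. The first observation is that every multiplication by $x^{a_{i+j-r}}$ in step~4 is a cyclic shift of a length-$p$ coefficient vector modulo $1+x^p$ and therefore costs no XOR; only additions and divisions contribute. An addition of two elements of $R_p$ costs $p$ XORs, a division computed by Lemma~\ref{lm:div} costs $p-3$ XORs, and a division computed by Lemma~\ref{lm:div1} costs $(3p-5)/2$ XORs; these three unit costs are precisely what generate the three summands in \eqref{eq:lucom}.

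Next I would count the additions. In the forward sweep (steps 2--4) the inner loop runs $i$ times for each $i=1,\dots,r-1$, giving $\sum_{i=1}^{r-1} i=r(r-1)/2$ additions. In the backward sweep the right-hand sides $u_j(x)+u_{j+1}(x)$ formed in step~8 contribute $\sum_{i=1}^{r-1}(i-1)=(r-1)(r-2)/2$ additions, while step~9 contributes one addition for each $i$, i.e.\ $r-1$ more; together these give $(r-1)(r-2)/2+(r-1)=r(r-1)/2$. Hence there are $r(r-1)$ additions in total, and at $p$ XORs each they account for the term $r(r-1)p$.

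Finally I would count and classify the divisions, which is where the bookkeeping is most delicate. Steps~6 and~8 together solve $(r-1)+\sum_{i=1}^{r-1}(i-1)=r(r-1)/2$ divisions. By Theorem~\ref{thm:vand}, the cheaper method of Lemma~\ref{lm:div} is valid exactly for the division in step~6 when $i=1$ and for the $r-2$ divisions in step~8 with $i+j=r+1$ and $i\in\{2,\dots,r-1\}$; this is a total of $r-1$ divisions, costing $(r-1)(p-3)$ XORs and giving the second summand. The remaining $r(r-1)/2-(r-1)=(r-1)(r-2)/2$ divisions are handled by Lemma~\ref{lm:div1} at $(3p-5)/2$ XORs each, contributing $(r-1)(r-2)(3p-5)/4$, the third summand. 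Adding the three contributions yields the bound in \eqref{eq:lucom}. The main obstacle is purely combinatorial: verifying that the condition $i+j=r+1$ singles out exactly $r-2$ of the step-8 divisions and that these, together with the single step-6 division, exhaust the divisions to which Lemma~\ref{lm:div} applies, so that no cheap division is double-counted and no expensive one is overlooked. The phrase ``at most'' then covers the fact that Lemma~\ref{lm:div1} is an admissible (if costlier) fallback for every division, so the stated count is a genuine upper bound.
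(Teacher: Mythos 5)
Your proposal is correct and follows essentially the same route as the paper: the paper's proof likewise counts $r(r-1)$ additions at $p$ XORs each, $r(r-1)/2$ free cyclic-shift multiplications, and $r(r-1)/2$ divisions of which $r-1$ are done by Lemma~\ref{lm:div} at $p-3$ XORs and the remaining $(r-1)(r-2)/2$ by Lemma~\ref{lm:div1} at $(3p-5)/2$ XORs. Your write-up simply makes explicit the loop-by-loop bookkeeping that the paper states without derivation.
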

\begin{proof}
In Algorithm \ref{alg:Vand}, there are $r(r-1)$ additions that require $r(r-1)p$ XORs, $r(r-1)/2$ multiplications that require no XORs (only cyclic shift applied) and $r(r-1)/2$ divisions that require $(r-1)(p-3)+(r-1)(r-2)(3p-5)/4$ XORs ($r-1$ divisions are computed by Lemma \ref{lm:div} and the other divisions are computed by Lemma \ref{lm:div1}).
Therefore, the total computation of Algorithm \ref{alg:Vand} is at most \eqref{eq:lucom}.
\end{proof}
According to Theorem \ref{thm:unique}, we can solve the Vandermonde system over $\mathbb{F}_2[x]/M_p(x)$ by first solving the Vandermonde system over $R_p$ and then reducing the $r$ resulting polynomials by modulo $M_p(x)$. By Theorem~\ref{thm:vand}, we can solve the Vandermonde system over $R_p$ by using the factorization method in Theorem~\ref{thm:factor}. 

\textbf{Remark.} Since in Algorithm \ref{alg:Vand}, when $i=1$, the output $u_r(x)$ is computed from the division in step 6  which is solved by Lemma~\ref{lm:div}, we have that the last coefficient of $u_r(x)$, $g_{p-1}$, is zero. For $i=1,2,\ldots,r-1$, the output $u_{r-i}(x)$ is a summation of $u_{r-i}(x)$ and $u_{r-i+1}(x)$, where $u_{r-i+1}(x)$ is computed in the last iteration  in step 7 for $i=r-1,r-2,\ldots,2$,  and $u_{r-i}(x)$ is computed in the last iteration  in step 7 for $i+1=r-1,r-2,\ldots,2$. Note that the last iteration for each $i$  is solved by Lemma \ref{lm:div}. Thus  the last coefficient of $u_i(x)$ is zero for $i=2,3,\ldots,r-1$. The output $u_1(x)$ is the summation of $u_1(x)$ and $u_2(x)$ by step 9, where $u_1(x)=v_1(x)$ and $u_2(x)$ is computed from the division in step 8 when $i=r-1$, which is solved by Lemma~\ref{lm:div}. Therefore, if the last coefficient of $v_1(x)$ is zero, then the last coefficient of the output $u_1(x)$ is zero. Otherwise, the last coefficient of the output $u_1(x)$ is one. We thus have that the last coefficient of each of the last $r-1$ resulting polynomials is zero. Therefore, it is not necessary to reduce the last $r-1$ resulting polynomials by modulo $M_p(x)$ and we only need to reduce the first resulting polynomial by modulo $M_p(x)$. In the example of $\textsf{EVENODD}(5,3,3;(0,1,4))$, the three components of the returned $\mathbf{u}$ are exactly equal to the three information polynomials of $\textsf{EVENODD}(5,3,3;(0,1,4))$, as the last coefficient of $v_1(x)$ is zero.

Although the LU decoding method of the $r\times r$ Vandermonde linear systems over $R_p$ is also discussed in Theorem 14 of \cite{Hou2016BASIC}, the complexity of the algorithm provided is $\frac{7}{4}r(r-1)p$ which is larger than~\eqref{eq:lucom}. The reason of computation reduction given in ~\eqref{eq:lucom} is as follows. There are $r-1$ divisions in Algorithm \ref{alg:Vand} that are solved with $p-3$ XORs involved for each division, while all $r(r-1)$ divisions are solved with $(3p-5)/2$ XORs involved for each division in \cite{Hou2016BASIC}.


\section{Erasure Decoding of $\textsf{EVENODD}(p,k,r)$ and $\textsf{RDP}(p,k,r)$}

The efficient decoding method of the Vandermonde linear systems over $R_p$ proposed in Section \ref{sec:dec} is applicable to the information column failure and some particular cases with both information failure and parity failure of $\textsf{EVENODD}(p,k,r;\textbf{g}(k))$, $\textsf{RDP}(p,k,r;\textbf{g}(k+1))$ and  Vandermonde array codes such as Blaum-Roth code~\cite{BlaumRoth93}. We first consider the decoding method for $\textsf{EVENODD}(p,k,r;\textbf{g}(k))$.

\subsection{Erasures Decoding of $\textsf{EVENODD}(p,k,r)$}
Suppose that $\gamma$ information columns $e_{1},\ldots,e_{\gamma}$ and $\delta$ parity columns $f_{1},\ldots,f_{\delta}$ are erased with $0\leq e_{1}<\ldots<e_{\gamma}\leq k-1$ and $k+1\leq f_{1}<\ldots<f_{\delta}\leq k+r-1$, where $k\geq \gamma \geq 0$, $r-1\geq \delta \geq 0$ and $\gamma+\delta=\rho\leq r$. Let $f_0=k-1$ and $f_{\delta+1}=k+r-1$, we assume that there exist $\lambda \in \{0,1,\ldots,\delta \}$ such that $f_{\lambda+1}-f_{\lambda}\geq \gamma+1$. We have that the columns $f_{\lambda}+1,\ldots,f_{\lambda}+\gamma$ are not erased. Let $$\mathcal{A}:=\{0,1,\ldots,k-1\}\setminus \{e_{1},e_{2},\ldots,e_{\gamma}\}$$
be a set of indices of the available information columns.
We want to first recover the lost information columns by reading $k-\gamma$ information columns with indices $i_1,i_2,\ldots, i_{k-\gamma}\in\mathcal{A}$, and $\gamma$ parity columns with indices $f_{\lambda}+1, f_{\lambda}+2,\ldots,f_{\lambda}+\gamma$, and then recover the failure parity column by re-encoding the failure parity bits according to  \eqref{EVENODD:diag} for $\ell=f_{1}-k,\ldots,f_{\delta}-k$.

First, we compute the bits of the $\gamma$ parity columns $f_{\lambda}+1, f_{\lambda}+2,\ldots,f_{\lambda}+\gamma$ of the augmented array according to \eqref{eq:augmented1} and \eqref{eq:augmented2} in Lemma \ref{lm:evenodd}. This can be done since column $k$ is not failed. Then, we represent the $k-\gamma$ information columns and $\gamma$ parity columns by $k-\gamma$ information polynomials $a'_i(x)$ as
\begin{equation}
a'_i(x):=a_{0,i}+a_{1,i}x+\cdots+a_{p-2,i}x^{p-2}
\label{eq:infor-poly}
\end{equation}
for $i\in \mathcal{A}$ and $\gamma$ parity polynomials $a'_{f_{\lambda}+j}(x)$
\begin{equation}
a'_{f_{\lambda}+j}(x):=a'_{0,f_{\lambda}+j}+a'_{1,f_{\lambda}+j}x+\cdots+a'_{p-1,f_{\lambda}+j}x^{p-1}
\label{eq:parity-poly}
\end{equation}
for $j=1,2,\ldots,\gamma$. Then, we subtract $k-\gamma$ information polynomials $a'_i(x)$ in \eqref{eq:infor-poly}, $i\in\mathcal{A}$ from the $\gamma$ parity polynomials $a'_{f_{\lambda}+1}(x),\ldots,a'_{f_{\lambda}+\gamma}(x)$ in \eqref{eq:parity-poly}, to obtain $\gamma$ \emph{syndrome polynomial} $\bar{a}_h(x)$ over $R_p$ as
\begin{equation}
\bar{a}_{h}(x)=a'_{f_{\lambda}+h}(x)+\sum_{i\in\mathcal{A}}a'_{i}(x)x^{g(i)\cdot (f_{\lambda}+h-k)},
\label{eq:syndrome}
\end{equation}
for $h=1,2,\ldots,\gamma$. Therefore, we can establish the relation between the syndrome polynomials and the erased information polynomials as follows
\begin{align*}
 \begin{bmatrix}
\bar{a}_1(x) &  \cdots & \bar{a}_{\gamma}(x)
\end{bmatrix}=
\begin{bmatrix}
a'_{e_1}(x)  & \cdots & a'_{e_{\gamma}}(x)
\end{bmatrix}\cdot 
\begin{bmatrix}
x^{g(e_1)(f_{\lambda}+1-k)} & x^{g(e_1)(f_{\lambda}+2-k)} & \cdots & x^{g(e_1)(f_{\lambda}+\gamma-k)} \\
x^{g(e_2)(f_{\lambda}+1-k)} & x^{g(e_2)(f_{\lambda}+2-k)} & \cdots & x^{g(e_2)(f_{\lambda}+\gamma-k)} \\
\vdots & \vdots & \ddots & \vdots \\
x^{g(e_\gamma)(f_{\lambda}+1-k)} & x^{g(e_\gamma)(f_{\lambda}+2-k)} & \cdots & x^{g(e_\gamma)(f_{\lambda}+\gamma-k)} \\
\end{bmatrix}.
\end{align*}
The right-hand side of the above equations can be reformulated as
\begin{align*}
\begin{bmatrix}
x^{g(e_1)(f_{\lambda}+1-k)}a'_{e_1}(x)  & \cdots & x^{g(e_\gamma)(f_{\lambda}+1-k)}a'_{e_{\gamma}}(x)
\end{bmatrix} \mathbf{V}_{\gamma\times \gamma}(\mathbf{e}),
\end{align*}
where $\mathbf{V}_{\gamma\times \gamma}(\mathbf{e})$ is a Vandermonde matrix
\[
\mathbf{V}_{\gamma\times \gamma}(\mathbf{e}):=\begin{bmatrix}
1 & x^{g(e_1)} & \cdots & x^{g(e_1)(\gamma-1)} \\
1 & x^{g(e_2)} & \cdots & x^{g(e_2)(\gamma-1)} \\
\vdots & \vdots & \ddots & \vdots \\
1 & x^{g(e_\gamma)} & \cdots & x^{g(e_\gamma)(\gamma-1)} \\
\end{bmatrix}.
\]
By \eqref{eq:sum}, we have that $a'_{f_{\lambda}+h}(1)=\sum_{j=0}^{k-1}a'_j(1)$ and we thus have 
$$\bar{a}_h(1)=a'_{f_{\lambda}+h}(1)+\sum_{i\in\mathcal{A}}a'_{i}(1)=\sum_{j=0}^{k-1}a'_j(1)+\sum_{i\in\mathcal{A}}a'_{i}(1),$$
which is independent on $h$. Thus, we obtain that $\bar{a}_1(1)=\cdots=\bar{a}_{r}(1)$. We can then obtain the erased information polynomials by first solving the Vandermonde linear systems over $R_p$ by Algorithm \ref{alg:Vand}, cyclic-left-shifting the solved polynomials $x^{g(e_i)(f_{\lambda}+1-k)}a'_{e_i}(x)$ by $g(e_i)(f_{\lambda}+1-k)$ positions for $i=1,2,\ldots,\gamma$, and then reduce $a'_{e_1}(x)$ modulo $M_p(x)$ when $\lambda>0$ according to the remark at the end of Section \ref{sec:lu}. If $\lambda=0$, we have $f_{\lambda}+1-k=0$ and the last coefficient of $\bar{a}_1(x)$ is zero, then we do not need to reduce $a'_{e_1}(x)$ modulo $M_p(x)$ according to the remark at the end of Section \ref{sec:lu}. The parity bits in the $\delta$ erased parity columns can be recovered by~\eqref{EVENODD:diag}.

Note that column $k$ is assumed to be non-failure, as column $k$ is needed to compute the bits of the augmented array by \eqref{eq:augmented1} and~\eqref{eq:augmented2}.

\subsection{Erasure Decoding of $\textsf{RDP}(p,k,r)$}

Similar to the decoding for   $\textsf{EVENODD}(p,k,r)$, we assume that $\gamma$ information columns indexed by $e_{1},\ldots,e_{\gamma}$ and $\delta$ parity columns $f_{1},\ldots,f_{\delta}$ of $\textsf{RDP}(p,k,r)$ are erased with $0\leq e_{1}<\ldots<e_{\gamma}\leq k-1$ and $k+1\leq f_{1}<\ldots<f_{\delta}\leq k+r-1$, where $k\geq \gamma \geq 0$, $r-1\geq \delta \geq 0$ and $\gamma+\delta=\rho\leq r$. Let $f_0=k-1$ and $f_{\delta+1}=k+r-1$ and assume that there exist $\lambda \in \{0,1,\ldots,\delta \}$ such that $f_{\lambda+1}-f_{\lambda}\geq \gamma+1$. The decoding procedure can be divided into two cases:  $\lambda\geq 1$ and  $\lambda=0$.

(i) $\lambda \geq 1$. First, we formulate $k-\gamma$ surviving information polynomials $b_i(x)$ for $i\in\mathcal{A}$ as
\[
b_i(x):=b_{0,i}+b_{1,i}x+\cdots+b_{p-2,i}x^{p-2},
\]
and $\gamma+1$ parity polynomials as
\[
b_{k}(x):=b_{0,k}+b_{1,k}x+\cdots+b_{p-2,k}x^{p-2},
\]
\[
b_{f_{\lambda}+j}(x):=b_{0,f_{\lambda}+j}+\cdots+b_{p-2,f_{\lambda}+j}x^{p-2}+(\sum_{i=0}^{p-2}b_{i,f_{\lambda}+j})x^{p-1},
\]
where $j=1,2,\ldots,\gamma$.
Then, we compute $\gamma$ syndrome polynomials $\bar{b}_1(x),\bar{b}_2(x),\ldots,\bar{b}_{\gamma}(x)$ by
\[
\bar{b}_h(x)=b_{f_{\lambda}+h}(x)+b_{k}(x)x^{g(k) (f_{\lambda}+h-k)}+\sum_{i\in \mathcal{A}}b_{i}(x)x^{g(i) (f_{\lambda}+h-k)},
\]
for $h=1,2,\ldots,\gamma$. It is easy to check that $\bar{b}_1(1)=\cdots=\bar{b}_{\gamma}(1)$. By the remark at the end of Section \ref{sec:lu}, the erased information polynomials can be computed by first solving the following Vandermonde system of linear equations
\begin{align*}
\begin{bmatrix}
\bar{b}_1(x) &  \cdots & \bar{b}_{\gamma}(x)
\end{bmatrix}=
\begin{bmatrix}
x^{g(e_1)(f_{\lambda}+1-k)}b_{e_1}(x) & \cdots & x^{g(e_\gamma)(f_{\lambda}+1-k)}b_{e_{\gamma}}(x)
\end{bmatrix}
\mathbf{V}_{r\times r}(\mathbf{e}),
\end{align*}
over $R_p$ by Algorithm \ref{alg:Vand}, cyclic-left-shifting the resultant $x^{g(e_i)(f_{\lambda}+1-k)}b_{e_i}(x)$ by $g(e_i)(f_{\lambda}+1-k)$ positions for $i=1,2,\ldots,\gamma$, and then reducing $b_{e_1}(x)$ modulo $M_p(x)$. 

(ii) $\lambda=0$. We have that columns $k,k+1,\ldots,k+\gamma$ are not erased. We can obtain $\gamma$ syndrome polynomials $\bar{b}_1(x),\bar{b}_2(x),\ldots,\bar{b}_{\gamma}(x)$ by
\[
\bar{b}_1(x)=b_{k}(x)+\sum_{i\in \mathcal{A}}b_{i}(x),
\]
and
\[
\bar{b}_h(x)=b_{k+h}(x)+b_{k}(x)x^{g(k)\cdot (h-1)}+\sum_{i\in \mathcal{A}}b_{i}(x)x^{g(i)\cdot (h-1)},
\]
for $h=2,3,\ldots,\gamma$. The erased information polynomials can be computed by solving the following Vandermonde linear system
\begin{align*}
\begin{bmatrix}
\bar{b}_1(x) &  \cdots & \bar{b}_{\gamma}(x)
\end{bmatrix}=
\begin{bmatrix}
b_{e_1}(x) & \cdots & b_{e_{\gamma}}(x)
\end{bmatrix}\cdot
\mathbf{V}_{r\times r}(\mathbf{e}).
\end{align*}
We do not need to reduce $b_{e_1}(x)$ modulo $M_p(x)$ according to the remark at the end of Section \ref{sec:lu}, as the last coefficient of $\bar{b}_1(x)$ is zero. Lastly, we can recover the $\delta$ parity columns by \eqref{RDP:diag} for $i=0,1,\ldots,p-2$ and $\ell =f_{1}-k,\ldots,f_{\delta}-k$.

Note that we need column $k$ to obtain the syndrome polynomials in the decoding procedure, so column $k$ is assumed to be non-failure column.

\textbf{Remark.} In the erasure decoding, the assumption that there exist $\lambda \in \{0,1,\ldots,\delta \}$ such that $f_{\lambda+1}-f_{\lambda}\geq \gamma+1$ is necessary; otherwise, we cannot obtain the Vandermonde linear system and Algorithm \ref{alg:Vand} is not applicable. The traditional decoding method, such as Cramer's rule, can be used to recover the failures if the assumption is not satisfied. In the next section, we consider the decoding complexity for two codes when the assumption is satisfied.


\section{Decoding Complexity}
In this section, we evaluate the decoding complexity for $\textsf{EVENODD}(p,k,r)$ and $\textsf{RDP}(p,k,r)$. We determine the \emph{normalized decoding complexity} as the ratio of the decoding complexity to the number of information bits.

When $r=3$, some specific decoding methods \cite{huang2008star,Jiang2013Improved,wang2012triple,RTP12,Huang2016An} are proposed to optimize the decoding complexity of three information erasures, such as the decoding method for STAR \cite{huang2008star} and the decoding method for Triple-Star \cite{wang2012triple}. However, all those decoding methods \cite{huang2008star,Jiang2013Improved,wang2012triple,RTP12,Huang2016An} only focus on the specific codes with $r=3$ and cannot be generalized for $r\geq 4$. In the following, we evaluate the decoding complexity for more than three information erasures.


\begin{theorem}
Suppose that $\gamma$ information columns and $\delta$ parity columns $f_{1},\ldots,f_{\delta}$ are erased. Let $f_0=k-1$ and $f_{\delta+1}=k+r-1$, we assume that there exist $\lambda \in \{0,1,\ldots,\delta \}$ such that $f_{\lambda+1}-f_{\lambda}\geq \gamma+1$. We employ Algorithm \ref{alg:Vand} to recover the $\gamma$ information erasures and recover the failure parity columns by re-encoding the parity bits. The decoding complexity of $\textsf{EVENODD}(p,k,r)$ is
\begin{equation}
p(\gamma k+\frac{3\gamma^2}{4}-\frac{\gamma}{4}+\frac{5}{2})-\gamma k-\frac{\gamma^2}{4}-\frac{5\gamma}{4}-\frac{5}{2}
+\delta(kp-k-1) \text{ when } \lambda>0,
\label{eq:de}
\end{equation}
\begin{equation}
p(\gamma k+\frac{3\gamma^2}{4}-\frac{\gamma}{4}-\frac{1}{2})-\gamma k-\frac{\gamma^2}{4}-\frac{5\gamma}{4}+\frac{1}{2}
+\delta(kp-k-1) \text{ when } \lambda=0.
\label{eq:de2}
\end{equation}
The decoding complexity of $\textsf{RDP}(p,k,r)$ is
\begin{equation}
p(\gamma k+\frac{3\gamma^2}{4}-\frac{\gamma}{4}+\frac{3}{2})-\gamma k-\frac{\gamma^2}{4}-\frac{9\gamma}{4}-\frac{1}{2}+\delta k(p-2) \text{ when } \lambda >0,
\label{eq:de1}
\end{equation}
\begin{equation}
p(\gamma k+\frac{3\gamma^2}{4}-\frac{\gamma}{4}-\frac{3}{2})-\gamma k-\frac{\gamma^2}{4}-\frac{9\gamma}{4}+\frac{7}{2}+\delta k(p-2) \text{ when } \lambda=0.
\label{eq:de3}
\end{equation}
\label{lm:com}
\end{theorem}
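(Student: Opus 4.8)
The plan is to establish each of \eqref{eq:de}--\eqref{eq:de3} by summing the XORs spent in the successive phases of the erasure-decoding procedure, deferring the cost of the linear solve to Theorem~\ref{thm:lucom}. For $\textsf{EVENODD}(p,k,r)$ with $\lambda>0$ I would split the recovery of the $\gamma$ erased information columns into: (a) reconstructing the augmented form of the $\gamma$ surviving parity columns through \eqref{eq:augmented1} and \eqref{eq:augmented2} of Lemma~\ref{lm:evenodd}, at $2(p-1)$ XORs per column plus one shared column sum of $p-2$ XORs; (b) forming the $\gamma$ syndrome polynomials \eqref{eq:syndrome}, each being a sum of the parity polynomial and the $k-\gamma$ cyclically shifted surviving information columns, so $\gamma(k-\gamma)$ column additions of $p-1$ XORs each; (c) solving the resulting $\gamma\times\gamma$ Vandermonde system with Algorithm~\ref{alg:Vand}, whose cost is \eqref{eq:lucom} with $r$ replaced by $\gamma$; (d) the cyclic shifts of the recovered polynomials, which are free; (e) the one modular reduction of $a'_{e_1}(x)$ by $M_p(x)$, costing $p-1$ XORs as dictated by the remark at the end of Section~\ref{sec:lu}; and (f) re-encoding the $\delta$ failed parity columns directly from \eqref{EVENODD:diag}, at $kp-k-1$ XORs per column.

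The core of the argument is to check that these six contributions assemble into the stated polynomial in $p$, $\gamma$, $k$ and $\delta$. Two features drive the match. First, the decisive cancellation in the $\gamma^2$ term: phase~(c) contributes $\tfrac{7}{4}\gamma^2 p$ to leading order (expand \eqref{eq:lucom} at $r=\gamma$), while the syndrome phase~(b) contributes $-\gamma^2 p$ through $\gamma(k-\gamma)(p-1)$, leaving the coefficient $\tfrac{7}{4}-1=\tfrac{3}{4}$ of \eqref{eq:de}. Second, because every column addition costs $p-1$ rather than $p$ XORs, the syndrome phase produces the paired terms $\gamma k(p-1)=\gamma k p-\gamma k$, which is exactly how the $+\gamma k$ coefficient of $p$ and the $-\gamma k$ constant of \eqref{eq:de} arise together; the separable summand $\delta(kp-k-1)$ is supplied verbatim by phase~(f). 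The companion formula \eqref{eq:de2} then follows by recording the savings when $\lambda=0$: since $f_0=k-1$, the row-parity column $k$ is now among the surviving parity columns, so it needs no augmentation, removing one instance of phase~(a) and saving $2(p-1)$; and the last coefficient of $\bar a_1(x)$ is zero, so the modular reduction of phase~(e) is unnecessary, saving a further $p-1$. These two savings account precisely for the $3(p-1)$ gap between \eqref{eq:de} and \eqref{eq:de2}.

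For $\textsf{RDP}(p,k,r)$ the same six-phase skeleton applies, and I would only track the two structural changes. Augmenting an RDP parity column needs just its single top coefficient $b_{p-1,k+\ell}$ as a column sum (via \eqref{RDP:piggyback}, equivalently the first lemma of Section~\ref{sec:array_code}), costing $p-2$ instead of $2(p-1)$ XORs, which is why the coefficient of $p$ in \eqref{eq:de1} is one less than in \eqref{eq:de}; and each RDP syndrome carries the additional term $b_k(x)x^{g(k)(f_\lambda+h-k)}$, contributing one extra column addition per syndrome. Re-encoding a failed RDP parity column from \eqref{RDP:diag} gives the summand $\delta k(p-2)$. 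The $\lambda=0$ formula \eqref{eq:de3} again drops one augmentation, one syndrome addition, and the modular reduction, which produces the $-3(p-1)$-type difference from \eqref{eq:de1}; I would verify that the resulting $p$- and constant-parts reproduce \eqref{eq:de1} and \eqref{eq:de3} exactly.

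I expect the principal difficulty to lie not in the leading-order terms but in the exact constant bookkeeping: deciding for each addition whether it costs $p$, $p-1$ or $p-2$ XORs according to which top coefficients are structurally zero, correctly amortising the shared sums (such as the single $\sum_{i} a_{i,k}$ reused across all augmented parity columns), and ensuring that the two or three boundary adjustments distinguishing $\lambda>0$ from $\lambda=0$ are each counted once and only once. Matching the half-integer constants $\pm\tfrac{5}{2}$, $\pm\tfrac{1}{2}$, $\pm\tfrac{7}{2}$ in \eqref{eq:de}--\eqref{eq:de3} demands that every phase's contribution to the non-$p$ part be tracked with the same care as its contribution to the coefficient of $p$, and this is where a miscount is most likely to occur.
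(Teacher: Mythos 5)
Your proposal follows essentially the same route as the paper's proof: the same decomposition into augmentation ($2(p-1)\gamma+(p-2)$ for EVENODD, $p-2$ per column for RDP), syndrome formation ($\gamma(k-\gamma)(p-1)$, with one extra addition per syndrome for RDP), the Vandermonde solve costed by Theorem~\ref{thm:lucom} with $r$ replaced by $\gamma$, a single $M_p(x)$ reduction of $p-1$ XORs when $\lambda>0$, and re-encoding of the $\delta$ parity columns at $\delta(kp-k-1)$ resp.\ $\delta k(p-2)$; your accounting of the $\lambda=0$ savings ($2(p-1)+(p-1)$ for EVENODD and $(p-2)+(p-1)+(p-1)$ for RDP) also matches the paper's. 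The plan is correct and the stated components do assemble into \eqref{eq:de}--\eqref{eq:de3} exactly.
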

\begin{proof}
Consider the decoding process of $\textsf{EVENODD}(p,k,r)$. When $\lambda>0$, we compute the bits $a'_{i,k+\ell}$ of the augmented array from $\textsf{EVENODD}(p,k,r)$ by \eqref{eq:augmented1} and \eqref{eq:augmented2} for $\ell=f_{\lambda}+1-k,\ldots,f_{\lambda}+\gamma-k$. We first compute $\sum_{i=0}^{p-2}a_{i,k}$, and then compute $a'_{p-1,k+\ell}$ by \eqref{eq:augmented1} and $a'_{i,k+\ell}$ by \eqref{eq:augmented2}. 
Thus, the total number of XORs involved in computing the bits $a'_{i,k+\ell}$ is $2(p-1)\gamma+(p-2)$. We now obtain $k-\gamma$ information polynomials in \eqref{eq:infor-poly} and $\gamma$ parity polynomials in~\eqref{eq:parity-poly}. Next, we subtract $k-\gamma$ surviving information polynomials from the $\gamma$ parity polynomials to obtain $\gamma$ syndrome polynomials by~\eqref{eq:syndrome} that takes $\gamma(k-\gamma)(p-1)$ XORs. 
The $\gamma$ information polynomials are obtained by solving the Vandermonde system of equations by using Algorithm \ref{alg:Vand}, of which the computational complexity is 
\[
\gamma(\gamma-1)p+(\gamma-1)(p-3)+(\gamma-1)(\gamma-2)(3p-5)/4
\] 
according to Theorem \ref{thm:lucom}.
Since the last $\gamma-1$ output polynomials of Algorithm \ref{alg:Vand} are exactly the last $\gamma-1$ information polynomials of $\textsf{EVENODD}(p,k,r)$, we only need to reduce the first polynomial modulo $M_p(x)$, which takes at most $p-1$ XORs. The erased $\delta$ parity columns can be recovered by \eqref{EVENODD:diag} and the complexity is $\delta(kp-k-1)$. Therefore, the decoding complexity of $\textsf{EVENODD}(p,k,r)$ is \eqref{eq:de} when $\lambda>0$.

When $\lambda=0$, there are two differences compared with the case of $\lambda>0$. First, we only need to compute the bits of the augmented array for $\gamma-1$ parity columns and the complexity is $2(p-1)(\gamma-1)+(p-2)$, as the bits in the first parity column of the augmented array are the same as those of the first parity column of $\textsf{EVENODD}(p,k,r)$, Second, we do not need to reduce the first polynomial modulo $M_p(x)$ after solving the Vandermonde system. Therefore, the decoding complexity of $\lambda=0$ has $3p-3$ XORs reduction and results in \eqref{eq:de2}.

In $\textsf{RDP}(p,k,r)$, computing $r$ syndrome polynomials takes 
$$\gamma(p-2)+\gamma(k-\gamma+1)(p-1)$$ XORs when $\lambda\geq 1$ and
$$(\gamma-1)(p-2)+(k-\gamma)(p-1)+(\gamma-1)(k-\gamma+1)(p-1)$$ XORs when $\lambda= 0$. Similar to $\textsf{EVENODD}(p,k,r)$, the Vandermonde linear system can be solved by Algorithm \ref{alg:Vand} with complexity 
\[
\gamma(\gamma-1)p+(\gamma-1)(p-3)+(\gamma-1)(\gamma-2)(3p-5)/4
\] 
XORs. Reducing one polynomial modulo $M_p(x)$ takes at most $p-1$ XORs  when $\lambda>0$.  The $\delta$ parity columns are recovered by \eqref{RDP:diag} and its complexity is $\delta k(p-2)$. Therefore, the total number of XORs involved in the decoding process results in \eqref{eq:de1} for $\lambda\geq 1$ and \eqref{eq:de3} for $\lambda=0$.
\end{proof}

Blaum-Roth decoding method~\cite{BlaumRoth93} proposed for decoding Blaum-Roth codes is also applicable to the decoding of $\textsf{EVENODD}(p,k,r)$. Suppose that $\gamma$ information columns and $\delta$ parity columns $f_{1},\ldots,f_{\delta}$ are erased with the assumption that there exist $\lambda \in \{0,1,\ldots,\delta \}$ such that $f_{\lambda+1}-f_{\lambda}\geq \gamma+1$. If one employs the Blaum-Roth decoding method to recover the information erasures and recover the failure parity columns by re-encoding the parity bits, the decoding complexity of $\textsf{EVENODD}(p,k,r)$ is~\cite{subedi2013comprehensive}
\[
\gamma(k+\gamma)p+(3\gamma^2+0.5\gamma)p+\gamma^2-0.5\gamma+\delta(kp-k-1).
\]
 The Blaum-Roth decoding method cannot be directly employed on the erasure decoding for $\textsf{RDP}(p,k,r)$. However, one can first transform $\lambda$ parity columns of $\textsf{RDP}(p,k,r)$ into the form of $\textsf{EVENODD}(p,k,r)$ and then recover the erased information columns by the decoding method of $\textsf{EVENODD}(p,k,r)$. Let $a_{i,j}=b_{i,j}$ for $i=0,1,\ldots,p-2$ and $j=0,1,\ldots,k-1$. That is, the information bits of $\textsf{RDP}(p,k,r)$ and $\textsf{EVENODD}(p,k,r)$ are the same. We then have $a_{i,k}=b_{i,k}$ by \eqref{EVENODD:row} and \eqref{RDP:row} and  
\begin{align}
&\sum_{i=0}^{p-2}b_{i,k+\ell}+b_{p-1-\ell g(k),k}
=\sum_{i=0}^{p-2}\sum_{j=0}^{k}b_{i-\ell g(j),j}+b_{p-1-\ell g(k),k}\label{eq:BR-1}\\
=&\sum_{i=0}^{p-1}\sum_{j=0}^{k}b_{i-\ell g(j),j}+\sum_{j=0}^{k}b_{p-1-\ell g(j),j}+b_{p-1-\ell g(k),k}\nonumber\\
=&\sum_{j=0}^{k}b_{p-1-\ell g(j),j}+b_{p-1-\ell g(k),k}\label{eq:BR-3}\\
=&\sum_{j=0}^{k-1}b_{p-1-\ell g(j),j}=\sum_{j=0}^{k-1}a_{p-1-\ell g(j),j}=a_{p-1,k+\ell},\nonumber
\end{align}
where \eqref{eq:BR-1} comes from \eqref{RDP:diag}, \eqref{eq:BR-3} comes from \eqref{RDP:row} and
\[
\{-\ell g(j),1-\ell g(j), \cdots,p-1-\ell g(j)\}=\{0,1,\cdots,p-1\} \bmod p.
\]
Therefore,  when $\lambda>0$, we can transform $\lambda$ parity columns of $\textsf{RDP}(p,k,r)$ into the form of $\textsf{EVENODD}(p,k,r)$ by
\[
a_{p-1,k+\ell}=\sum_{i=0}^{p-2}b_{i,k+\ell}+b_{p-1-\ell g(k),k}
\]
and 
\[
a_{i,k+\ell}=b_{i,k+\ell}+b_{i-\ell g(k),k}+a_{p-1,k+\ell}
\]
for $\ell=f_{\lambda}+1-k,\ldots,f_{\lambda}+\gamma-k$ and $i=0,1,\ldots,p-2$. When $\lambda=0$, we only need to transform the bits for $\ell=1,\ldots,\gamma-1$ and $i=0,1,\ldots,p-2$, as column $k$ of $\textsf{EVENODD}(p,k,r)$ is the same as column $k$ of $\textsf{RDP}(p,k,r)$. Then we employ the Blaum-Roth decoding method to obtain the erased $\gamma$ information columns of $\textsf{EVENODD}(p,k,r)$. Lastly, we recover $\delta$ parity columns by \eqref{RDP:diag}. The decoding complexity is then
\[
\gamma(k+\gamma)p+(3\gamma^2+3.5\gamma)p+\gamma^2-0.5\gamma+\delta(kp-2k)-3 \text{ for } \lambda>0,
\]
\[
\gamma(k+\gamma)p+(3\gamma^2+3.5\gamma-3)p+\gamma^2-3.5\gamma+\delta(kp-2k)+3 \text{ for } \lambda=0. 
\]

Note that we can recover the erased parity columns by encoding the parity bits according to the definition for both $\textsf{EVENODD}(p,k,r)$ and $\textsf{RDP}(p,k,r)$ after recovering all the erased information bits. Therefore, the main difference of the decoding complexity between the proposed LU decoding method and the Blaum-Roth decoding method lies in the complexity of decoding the Vandermonde linear system, i.e., the erasure decoding of information failures. 
In the following, we consider a special case where $\delta=0$. We evaluate the decoding complexity of $\gamma$  information erasures for the proposed  LU decoding method and the Blaum-Roth decoding method.

\begin{figure}[tbh]
\subfigure[$r=4$ and $p$ ranges from 5 to 59.]{\centering{}\includegraphics[width=0.45\textwidth]{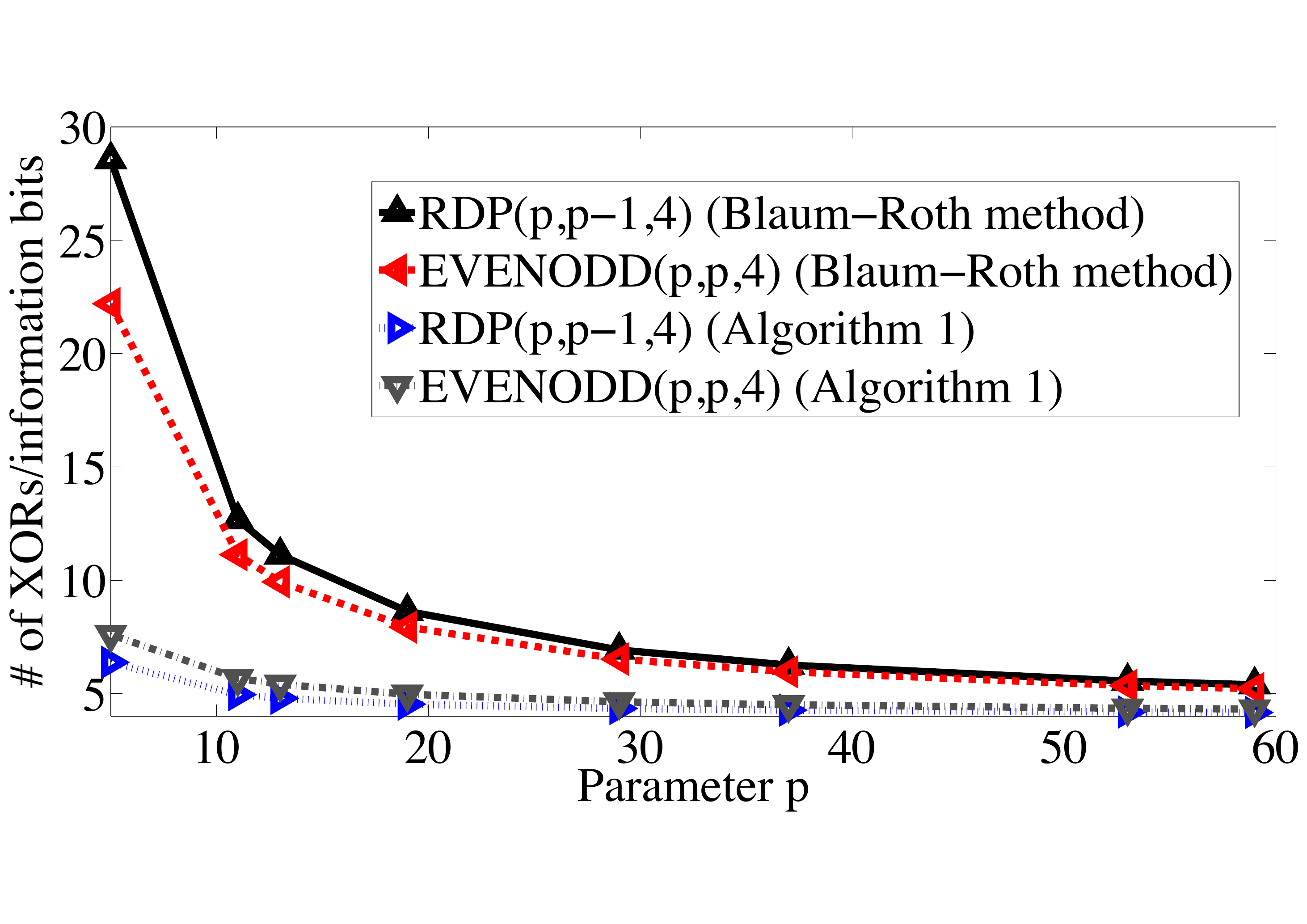}}
\subfigure[$r=5$ and $p$ ranges from 5 to 59.]{\centering{}\includegraphics[width=0.45\textwidth]{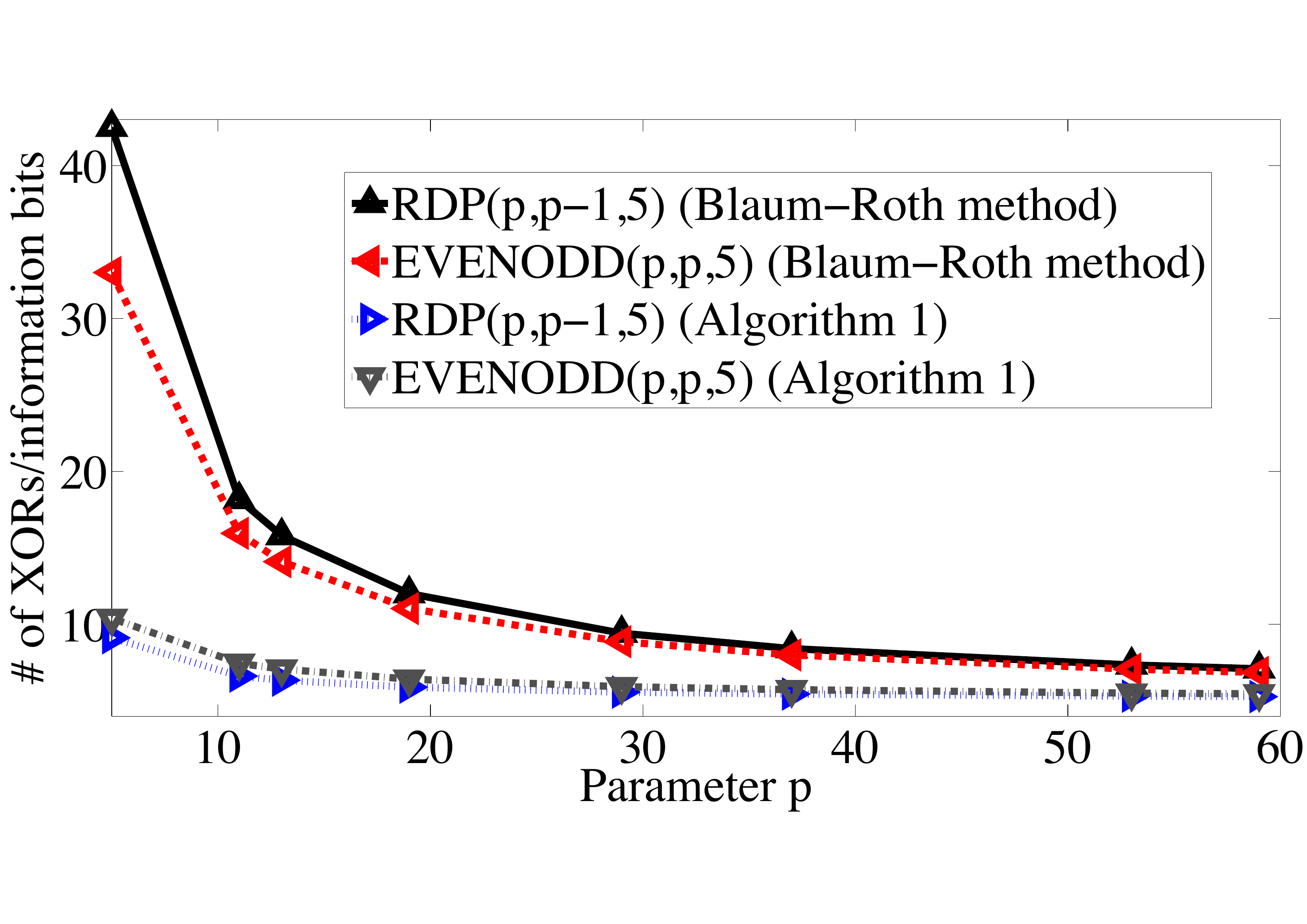}}
\caption{The normalized decoding complexity of $\gamma=r$ information erasures $\textsf{EVENODD}(p,p,r)$ and $\textsf{RDP}(p,p-1,r)$ by Algorithm \ref{alg:Vand} and by Blaum-Roth decoding method for $r=4,5$.}
\vspace{-0.5cm}
\label{fig:de}
\end{figure}

For fair comparison, we let $k=p$ for $\textsf{EVENODD}(p,k,r)$ and $k=p-1$ for $\textsf{RDP}(p,k,r)$.  According to Lemma \ref{lm:com}, the decoding complexity of $\gamma$ information erasures of $\textsf{EVENODD}(p,p,r)$ and $\textsf{RDP}(p,p-1,r)$ by Algorithm \ref{alg:Vand} is
\[
p(\gamma p+\frac{3\gamma^2}{4}-\frac{5\gamma}{4}-\frac{1}{2})-\frac{\gamma^2}{4}-\frac{5\gamma}{4}+\frac{1}{2}, \text{ and }
\]
\[
p(\gamma (p-1)+\frac{3\gamma^2}{4}-\frac{5\gamma}{4}-\frac{3}{2})-\frac{\gamma^2}{4}-\frac{5\gamma}{4}+\frac{7}{2},
\]
respectively. 

When $r=4$ and $5$, the normalized decoding complexity of $\gamma=r$ information erasures of $\textsf{EVENODD}(p,p,r)$ and $\textsf{RDP}(p,p-1,r)$ by Algorithm \ref{alg:Vand} and by Blaum-Roth decoding method is shown in Fig. \ref{fig:de}. One can observe that $\textsf{EVENODD}(p,p,r)$ and $\textsf{RDP}(p,p-1,r)$ decoded by LU decoding method is more efficient than by the Blaum-Roth decoding method. When $r=4$ and $p$ ranges from $5$ to $59$, the decoding complexity of $\textsf{EVENODD}(p,p,4)$ and $\textsf{RDP}(p,p-1,4)$ by Algorithm \ref{alg:Vand} has 20.1\% to 71.3\% and 22.7\% to 77.7\% reduction on that by the Blaum-Roth decoding method, respectively. When $r=5$, the complexity reduction is 20.4\% to 68.5\% and 25.7\% to 78.5\% for $\textsf{EVENODD}(p,p,5)$ and $\textsf{RDP}(p,p-1,5)$, respectively. The reduction increases when $p$ is small and $r$ is large. For example, $\textsf{RDP}(p,p-1,r)$ decoded by Algorithm \ref{alg:Vand} has 78.5\% less decoding complexity than that by the Blaum-Roth decoding method when $p=5$ and $r=5$. 



The reasons that the decoding complexity of $\textsf{EVENODD}(p,p,r)$ and $\textsf{RDP}(p,p-1,r)$ by the LU decoding method is less than that by the Blaum-Roth decoding method are summarized as follows. First, the Blaum-Roth decoding method is operated over the ring $\mathbb{F}_2[x]/M_p(x)$; however, we show  that the Vandermonde linear systems over $\mathbb{F}_2[x]/M_p(x)$ can be computed by first solving the Vandermonde linear systems over $R_p$ and then reducing the results by $M_p(x)$ modulus.\footnote{When there are only information erasures,  modulo $M_p(x)$ is not needed in the decoding procedure.} The operation of multiplication and division over $R_p$ is more efficient than that over $\mathbb{F}_2[x]/M_p(x)$. Second, the proposed LU decoding method is more efficient than the Blaum-Roth decoding method.


\section{Discussion and Conclusions}
\label{sec:discussions}


In this paper, we present a unified construction of EVENODD  codes and RDP  codes, which can be viewed as a generalization of extended EVENODD codes and generalized RDP codes. 
 Moreover, an efficient LU decoding method is proposed for EVENODD codes and RDP  codes, and we show that the LU decoding method requires less XOR operations than the existing algorithm when more than three information columns are failure.

In most existing Vandermonde array codes, the parity bits are computed along some straight lines in the array, while the parity bits of the proposed EVENODD codes and RDP  codes are computed along some polygonal lines in the array. By this generalization, EVENODD  codes and RDP codes may have more design space for decoding algorithm when there is a failure column. For example, assume that the first column of $\textsf{EVENODD}(5,3,3;(0,1,4))$ in Table \ref{table:evenodd} is erased, we want to recover the erased column by downloading some bits from other four surviving columns. We can recover bits $a_{0,0},a_{2,0}$ by
\begin{align*}
a_{0,0}=a_{0,1}+a_{0,2}+a_{0,3},
a_{2,0}=a_{2,1}+a_{2,2}+a_{2,3},
\end{align*}
and  bits $a_{1,0},a_{3,0}$ by
\begin{align*}
a_{1,0}=a_{0,1}+a_{2,2}+a_{4,4}+a_{1,4},
a_{3,0}=a_{2,1}+a_{4,4}+a_{3,4},
\end{align*}
where $a_{4,4}$ can be computed as $a_{4,4}=a_{3,1}+a_{0,2}$. In total, $9$ bits are downloaded to recover the first column. For original EVENODD codes, an erased information column is covered by downloading at least $0.75(p-1)$ bits from each of the helped $k+1$ columns \cite{wang2010rebuilding}. Hence, the total number of bits  to be downloaded to recover the first column of original EVENODD codes is at least $12$. Therefore, one may design a decoding algorithm for an information failure such that the number of bits downloaded is less than that of the original EVENODD codes.  Designing an algorithm to recover a failure column for general parameters is then an interesting future work.

\appendices

\bibliographystyle{IEEEtran}

\begin{thebibliography}{10}
\providecommand{\url}[1]{#1}
\csname url@samestyle\endcsname
\providecommand{\newblock}{\relax}
\providecommand{\bibinfo}[2]{#2}
\providecommand{\BIBentrySTDinterwordspacing}{\spaceskip=0pt\relax}
\providecommand{\BIBentryALTinterwordstretchfactor}{4}
\providecommand{\BIBentryALTinterwordspacing}{\spaceskip=\fontdimen2\font plus
\BIBentryALTinterwordstretchfactor\fontdimen3\font minus
  \fontdimen4\font\relax}
\providecommand{\BIBforeignlanguage}[2]{{%
\expandafter\ifx\csname l@#1\endcsname\relax
\typeout{** WARNING: IEEEtran.bst: No hyphenation pattern has been}%
\typeout{** loaded for the language `#1'. Using the pattern for}%
\typeout{** the default language instead.}%
\else
\language=\csname l@#1\endcsname
\fi
#2}}
\providecommand{\BIBdecl}{\relax}
\BIBdecl

\bibitem{RAID89}
D.~A. Patterson, P.~Chen, G.~Gibson, and R.~H. Katz, ``Introduction to
  {R}edundant {A}rrays of {I}nexpensive {D}isks ({RAID}),'' in \emph{Proc. IEEE
  COMPCON}, vol.~89, 1989, pp. 112--117.

\bibitem{RAID93}
P.~M. Chen, E.~K. Lee, G.~A. Gibson, R.~H. Katz, and D.~A. Patterson, ``{RAID}:
  high-performance, reliable secondary storage,'' University of California at
  Berkeley, Berkeley, Tech. Rep. CSD 03-778, 1993.

\bibitem{blaum1995evenodd}
M.~Blaum, J.~Brady, J.~Bruck, and J.~Menon, ``{EVENODD}: An efficient scheme
  for tolerating double disk failures in {RAID} architectures,'' \emph{IEEE
  Trans. Computers}, vol.~44, no.~2, pp. 192--202, 1995.

\bibitem{corbett2004row}
P.~Corbett, B.~English, A.~Goel, T.~Grcanac, S.~Kleiman, J.~Leong, and
  S.~Sankar, ``Row-diagonal parity for double disk failure correction,'' in
  \emph{Proc. of the 3rd USENIX Conf. on File and Storage Technologies (FAST)},
  2004, pp. 1--14.

\bibitem{BeyondRAID}
A.~H. Leventhal, ``Triple-parity {RAID} and beyond,'' \emph{Comm. of the ACM},
  vol.~53, no.~1, pp. 58--63, January 2010.

\bibitem{blaum1996mds}
M.~Blaum, J.~Bruck, and A.~Vardy, ``{MDS} array codes with independent parity
  symbols,'' \emph{IEEE Trans. Information Theory}, vol.~42, no.~2, pp.
  529--542, 1996.

\bibitem{RTP12}
A.~Goel and P.~Corbett, ``{RAID} triple parity,'' in \emph{ACM SIGOPS Operating
  Systems Review}, vol.~36, no.~3, December 2012, pp. 41--49.

\bibitem{blaum2006family}
M.~Blaum, ``A family of {MDS} array codes with minimal number of encoding
  operations,'' in \emph{IEEE Int. Symp. on Inf. Theory}, 2006, pp. 2784--2788.

\bibitem{blaum2002evenodd}
M.~Blaum, J.~Brady, J.~Bruck, J.~Menon, and A.~Vardy, ``The {EVENODD} code and
  its generalization: An effcient scheme for tolerating multiple disk failures
  in {RAID} architectures,'' in \emph{High Performance Mass Storage and
  Parallel I/O}.\hskip 1em plus 0.5em minus 0.4em\relax Wiley-IEEE Press, 2002,
  ch.~8, pp. 187--208.

\bibitem{wang2010rebuilding}
Z.~Wang, A.~G. Dimakis, and J.~Bruck, ``Rebuilding for array codes in
  distributed storage systems,'' in \emph{IEEE GLOBECOM Workshops (GC Wkshps),
  2010}, pp. 1905--1909.

\bibitem{xiang2010optimal}
L.~Xiang, Y.~Xu, J.~Lui, and Q.~Chang, ``Optimal recovery of single disk
  failure in {RDP} code storage systems,'' in \emph{ACM SIGMETRICS Performance
  Evaluation Rev.}, vol.~38, no.~1.\hskip 1em plus 0.5em minus 0.4em\relax ACM,
  2010, pp. 119--130.

\bibitem{RecoverRaid}
L.~Xiang, Y.~Xu, J.~C.~S. Lui, Q.~Chang, Y.~Pan, and R.~Li, ``A hybrid approach
  of failed disk recovery using {RAID}-6 codes: Algorithms and performance
  evaluation,'' \emph{ACM Trans. on Storage}, vol.~7, no.~3, pp. 1--34, October
  2011.

\bibitem{Zhu2014On}
Y.~Zhu, P.~P.~C. Lee, Y.~Xu, Y.~Hu, and L.~Xiang, ``On the speedup of recovery
  in large-scale erasure-coded storage systems,'' \emph{IEEE Transactions on
  Parallel \& Distributed Systems}, vol.~25, no.~7, pp. 1830--1840, 2014.

\bibitem{huang2008star}
C.~Huang and L.~Xu, ``{STAR}: An efficient coding scheme for correcting triple
  storage node failures,'' \emph{IEEE Trans. Computers}, vol.~57, no.~7, pp.
  889--901, 2008.

\bibitem{Jiang2013Improved}
H.~Jiang, M.~Fan, Y.~Xiao, X.~Wang, and Y.~Wu, ``Improved decoding algorithm
  for the generalized {EVENODD} array code,'' in \emph{International Conference
  on Computer Science and Network Technology}, 2013, pp. 2216--2219.

\bibitem{wang2012triple}
Y.~Wang, G.~Li, and X.~Zhong, ``Triple-{S}tar: A coding scheme with optimal
  encoding complexity for tolerating triple disk failures in {RAID},''
  \emph{International Journal of innovative Computing, Information and
  Control}, vol.~3, pp. 1731--1472, 2012.

\bibitem{Huang2016An}
Z.~Huang, H.~Jiang, and K.~Zhou, ``An improved decoding algorithm for
  generalized {RDP} codes,'' \emph{IEEE Communications Letters}, vol.~20,
  no.~4, pp. 632--635, 2016.

\bibitem{Hou2016On}
H.~Hou, K.~W. Shum, and H.~Li, ``On the {MDS} condition of
  {B}laum-{B}ruck-{V}ardy codes with large number parity columns,'' \emph{IEEE
  Communications Letters}, vol.~20, no.~4, pp. 644--647, 2016.

\bibitem{BlaumRoth93}
M.~Blaum and R.~M. Roth, ``New array codes for multiple phased burst
  correction,'' \emph{IEEE Trans. Information Theory}, vol.~39, no.~1, pp.
  66--77, January 1993.

\bibitem{Qian2011On}
Q.~Guo and H.~Kan, ``On systematic encoding for {B}laum-{R}oth codes,''
  \emph{Development}, vol.~42, no.~4, pp. 2353--2357, 2011.

\bibitem{Guo2013An}
------, ``An efficient interpolation-based systematic encoder for low-rate
  {B}laum-{R}oth codes,'' in \emph{IEEE Int. Symp. on Inf. Theory}, 2013, pp.
  2384--2388.

\bibitem{shumregenerating}
K.~W. Shum, H.~Hou, M.~Chen, H.~Xu, and H.~Li, ``Regenerating codes over a
  binary cyclic code,'' in \emph{Proc. {IEEE} Int. Symp. Inf. Theory},
  Honolulu, July 2014, pp. 1046--1050.

\bibitem{Hou2016BASIC}
H.~Hou, K.~W. Shum, M.~Chen, and H.~Li, ``{BASIC} codes: Low-complexity
  regenerating codes for distributed storage systems,'' \emph{IEEE Transactions
  on Information Theory}, vol.~62, no.~6, pp. 3053--3069, 2016.

\bibitem{Fenn1997Bit}
S.~T.~J. Fenn, M.~G. Parker, M.~Benaissa, and D.~Taylor, ``Bit-serial
  multiplication in {GF}($2^m$) using irreducible all-one polynomials,''
  \emph{IEEE Proceedings on Computers and Digital Techniques}, vol. 144, no.~6,
  pp. 391--393, 1997.

\bibitem{Silverman1999Fast}
J.~H. Silverman, ``Fast multiplication in finite fields {GF}($2^n$),'' in
  \emph{International Workshop on Cryptographic Hardware and Embedded Systems},
  1999, pp. 122--134.

\bibitem{Hou2017new}
H.~Hou and Y.~S. Han, ``A new construction and an efficient decoding method for
  {R}abin-{Like} codes,'' \emph{IEEE Transactions on Communications}, pp. 1--1,
  2017.

\bibitem{Hou2014New}
H.~Hou, K.~W. Shum, M.~Chen, and H.~Li, ``New {MDS} array code correcting
  multiple disk failures,'' in \emph{Global Communications Conference}, 2014,
  pp. 2369--2374.

\bibitem{yang2005lu}
S.-L. Yang, ``On the {LU} factorization of the {V}andermonde matrix,''
  \emph{Discrete applied mathematics}, vol. 146, no.~1, pp. 102--105, 2005.

\bibitem{subedi2013comprehensive}
P.~Subedi and X.~He, ``A comprehensive analysis of {XOR}-based erasure codes
  tolerating 3 or more concurrent failures,'' in \emph{IEEE 27th International
  Parallel and Distributed Processing Symposium Workshops \& PhD Forum
  (IPDPSW)}, 2013, pp. 1528--1537.

\end{thebibliography}

\end{document}